\newtheorem{theorem}{Theorem}
\newtheorem{lemma}[theorem]{Lemma}
\newtheorem{proposition}[theorem]{Proposition}
\newtheorem{remark}[theorem]{Remark}
\newenvironment{proof}[1][Proof]{\textbf{#1.} }{\ \rule{0.5em}{0.5em}}
\begin{document}

\title{The $\alpha$-Hypergeometric Stochastic Volatility Model\footnote{Comments from participants at the $24^{th}$ New Zealand Econometric Study Group Meeting (2014) held at University of Waikato, Hamilton, New Zealand and at the $10^{th}$ Annual Conference of the Asia-Pacific Association of Derivatives (2014) Busan, Korea, are gratefully acknowledged. The usual caveat applies.}}
\author{Jos\'e Da Fonseca 
\thanks{Auckland University of Technology, Business School, Department of Finance, Private Bag 92006, 1142 Auckland, New Zealand. Phone: ++64 9 9219999 extn 5063. Email: jose.dafonseca@aut.ac.nz} 
\and
Claude Martini \thanks{Zeliade Systems, 56, Rue Jean-Jacques Rousseau, 75001 Paris, France. Phone: ++33 1 40 26 17 81. Email: cmartini@zeliade.com}
}
\date{\today}
\maketitle

\begin{abstract}
The aim of this work is to introduce a new stochastic volatility model for equity derivatives. To overcome some of the well-known problems of the Heston model, and more generally of the affine models, we define a new specification for the dynamics of the stock and its volatility. Within this framework we develop all the key elements to perform the pricing of vanilla European options as well as of volatility derivatives. We clarify the conditions under which the stock price is a martingale and illustrate how the model can be implemented.
\end{abstract}

\vspace{12cm}
\maketitle

{\bf Keywords}: Equity stochastic volatility models, Volatility derivatives, European option pricing.
\newpage
\section{Introduction}

Positivity is an essential property in financial modelling. In the field of equity derivatives since the seminal work of \cite{BlackScholes1973} several extensions were proposed to handle the stochastic behaviour of the volatility. Among all the proposed models the \cite{heston} model is, certainly, the most analysed model essentially because of its analytical tractability. However, when calibrated on option prices one usually obtains that the Feller condition, ensuring that the process does not reach zero in finite time, is not satisfied (see \cite{DaFonseca2011} for calibration results on several indexes as well as extensions of the Heston model). The mean reverting parameter is problematic to estimate and uses to be small. Notice that this fact seems to be widely known among practitioners and is sometimes mentioned in academic works, see \cite{HenryLabordere2009} page 183. It seems to us that this problem is mainly related to the fact that option prices contain integrated volatility. One consequence is that the volatility remains ``too close'' to zero and contrasts with its empirical distribution which is closer to a lognormal one. This partially motivates the model proposed in \cite{Gatheral2008} which specifies  a (double) lognormal dynamic for the volatility. In that case, a major drawback is that no closed-form solution is available for vanilla options turning the calibration a tedious exercise.\\

From an historical point of view it is well known that the first stochastic volatility model was proposed in \cite{Hull1987} and specifies for the volatility dynamic a geometric Brownian motion which is therefore non stationary. A closed-form solution (for vanilla options) for this model was proposed much later in \cite{leb1996} but it was also shown in \cite{jourdain2004} that the spot loses its martingale property for some parameter values. From a modelling point of view the model of \cite{Chesney1989} is certainly the most natural one as it specifies for the volatility the exponential of a stationary Ornstein-Uhlenbeck process. By construction the volatility is positive. Unfortunately, no closed-form solution for the characteristic function of the stock is available for this model. In \cite{Stein1991} a closed-form solution is obtained for a stochastic volatility model whose volatility follows an Ornstein-Uhlenbeck process, hence  Gaussian, which is problematic regarding the aspect of positivity.\\

Our purpose is to develop a stochastic volatility model which is tractable, that is to say, for which most of the key ingredients to perform derivative pricing can be computed efficiently and has positive distribution for the volatility.\\

The structure of the paper is as follows. In a first section, we introduce the volatility dynamic and study both the volatility and spot properties. For the volatility, we perform a transformation of the process that allows us to heavily use the results of \cite{don2001} whereas for the stock we closely follow \cite{jourdain2004}. In a second section, specifying further the volatility dynamic we analyse the volatility and compute the Mellin transform of the stock using an approach based on the resolvent. For this part, we were inspired by \cite{pin2010} and \cite{Pintoux2011} and heavily used the surveys of \cite{my1} and \cite{my2}. We postpone the discussion of related works to the third section and the last section concludes the paper.

\section{The Model and its Properties}
In the $\alpha$-Hypergeometric model the dynamic is given by

\begin{eqnarray}
df_t&=&f_t e^{v_t }dw_{1,t},\\
dv_t&=&(a - b e^{\alpha v_t}) dt + \sigma dw_{2,t} \label{row2}
\end{eqnarray}

with  $\alpha>0$ and $(w_{1,t},w_{2,t})_{t\geq 0}$ a Brownian motion with $dw_{1,t}.dw_{2,t}=\rho dt$ under the risk neutral probability measure $P$. We denote by $\mathbb{E}$ the expectation under this probability (also, we may denote $\mathbb{E}^P$ whenever needed to avoid confusion). So $v_t$ is the instantaneous log volatility and the instantaneous variance is given by $V_t=e^{2v_t}$. We assume $b >0$ and $\sigma>0$, yet there is no constraint on the sign of $a$.\\

This new equity model dynamic has been designed to make up for the numerous flaws observed when implementing the Heston model (or any other affine model).

\subsection{Study of the Variance Process}

\subsubsection{The variance as a functional of $w_{2}$ }
Let $(v_t)_{t\geq 0}$ denote any solution of the stochastic differential equations Eq.\eqref{row2} (SDE in the sequel). Let us observe that $v_t-v_0 +b \int_0^t \exp{ \alpha v_s} ds =a t +\sigma w_{2,t}$. Introducing the integral $I(t) = \int_0^t \exp{ \alpha v_s} ds$, we note that $\frac{d I(t)}{dt} = \exp{ \alpha v_t}$ so that
$$\ln{\frac{d I(t)}{dt}}+\alpha b  I(t) = \alpha (v_0 + a t +\sigma w_{2,t})$$
or yet
$$\frac{d I(t)}{dt} \exp{\alpha b I(t)} = \exp{\alpha (v_0 + a t +\sigma w_{2,t})},$$
which gives in turn by integrating
$$\exp{\alpha b I(t)} = 1+\alpha b \int_0^t \exp{\alpha (v_0 + a s +\sigma w_{2,s})} ds.$$

We get eventually

$$I(t)=\frac{\ln{ (1+\alpha b \int_0^t \exp{\alpha (v_0 + a s +\sigma w_{2,s})} ds)}}{\alpha b}$$

and by differentiating

$$V_t = \exp{2 v_t} = \left(\frac{d I(t)}{dt}\right)^\frac{2}{\alpha} = \frac{V_0 \exp{2a t + 2\sigma w_{2,t}}}{(1+\alpha b V_0^{\frac{\alpha}{2}} \int_0^t \exp{\alpha (a s +\sigma w_{2,s}) ds})^\frac{2}{\alpha}}.$$

Conversely, let $v$ be defined by the preceding equation (i.e., $v_t = \frac{1}{2}\ln{V_0}+ a t + \sigma w_{2,t}-\frac{\ln{(1+\alpha b V_0^{\frac{\alpha}{2}} \int_0^t \exp{\alpha (a s +\sigma w_{2,s})} ds)}}{\alpha}$).
Then $v_0 = \frac{1}{2}\ln{V_0}$ and
$$\int_0^t \exp{\alpha v_s} ds = \int_0^t \frac{V_0^{\frac{\alpha}{2}} \exp{\alpha (a s + \sigma w_{2,s})}}{1+\alpha b V_0^{\frac{\alpha}{2}} \int_0^s \exp{\alpha (a u +\sigma w_{2,u})} du} ds = \frac{\ln{ (1+\alpha b \int_0^t V_0^{\frac{\alpha}{2}} \exp{\alpha (a s +\sigma w_{2,s})} ds)}}{\alpha b}$$

so that $v_t = v_0 + a t + \sigma w_{2,_t}-b \int_0^t \exp{\alpha v_s} ds$, which is the integrated form of the above SDE.\\  

We have therefore proven that there is existence and pathwise uniqueness for the SDE defining the variance behaviour. Moreover, we have an explicit solution to this SDE in terms of the driving Brownian motion $w_2$. Lastly, observe also that in the limiting case $\alpha=0$, one directly gets $v_t^0 = (a-b) t+ w_{2,t}$. It is easily checked that $\lim_{\alpha \to 0} v_t^\alpha = v_t^0$ pathwise, so that there is no loss of continuity when $\alpha$ converges to zero.

\subsubsection{Basic properties}

\paragraph{Dependency on $\alpha$:}
From the driving SDE it is easily seen by scaling that
$$\alpha v_{v_0, \alpha, a, b, \sigma}= v_{\alpha v_0, 1, \alpha a, \alpha b, \alpha \sigma}, V^\alpha_{V_0, \alpha, a, b, \sigma}= V_{V^\alpha_0, 1, \alpha a, \alpha b, \alpha \sigma}  $$
this can be checked also directly on the preceding formulas.

\paragraph{What happens for negative $b$:}
It follows that the SDE has a well defined solution when $b$ {\it and} $\alpha$ are negative. If $b<0$ and $\alpha >0$, it follows form the expression of $I(t)$ that the solution is well defined up to the stopping time 
$$T^{*}=\inf \left\lbrace t \vert \int_0^t \exp{\alpha (v_0 + a s +\sigma w_{2,s})} ds > -\frac{1}{\alpha b}\right\rbrace$$

\paragraph{Noiseless limit:} \label{sec:noiseless}
The above computations are valid when 
$\sigma=0$. In this case the formula simplifies to

$$I(t)=\frac{\ln{ (1+\frac{b}{a}e^{\alpha v_0} (e^{\alpha a t}-1))}}{\alpha b}$$

and by differentiating

$$V_t = \frac{V_0 e^{2a t}}{(1+\frac{b}{a} V_0^{\frac{\alpha}{2}} (e^{\alpha a t}-1))^\frac{2}{\alpha}}.$$

It follows in particular that $\frac{I(t)}{t} \to \frac{a}{b}$ when $t \to \infty$.

\subsubsection{Connection with the Wong-Shyryaev process}

The driving process of the variance is

$$dv_t = (a-b e^{\alpha v_t}) dt + \sigma dw_{2,t}.$$

Consider now $Z_t = e^{-\alpha v_t}=V_t^{-\frac{\alpha}{2}}$ then we have 
$$d(e^{-\alpha v_t}) = -\alpha e^{-\alpha v_t} [(a-b e^{\alpha v_t}) dt + \sigma dw_{2,t}] + \frac{\alpha^2 \sigma^2}{2} e^{-\alpha v_t} dt$$
so that
$$dZ_t = [\alpha b +(\frac{\alpha^2 \sigma^2}{2} -\alpha a) Z_t] dt + \alpha \sigma Z_t dC_t$$
with $Z_0=e^{-\alpha v_0} $ where $(C_t)_{t\geq 0}$ is the Brownian motion $(-w_{2,t})_{t\geq 0}$.
This process (modulo a convenient rescaling) has been studied in \cite{don2001} and in \cite{peskir} where it is called the Shyryaev process. It is also sometimes called the Wong process. We will mostly make use of \cite{don2001}. 
To alleviate the notation, let us write it as
$$dZ_t = (m+n Z_t) dt + p Z_t dC_t$$

with

$$m=\alpha b,\quad n=(\frac{\alpha^2 \sigma^2}{2} -\alpha a) \textrm{ and } p= \alpha \sigma. $$

This SDE is affine in $Z_t$ and thus easy to solve: introduce $(X_t)_{t \geq 0}$ the solution of the homogeneous SDE $dX_t = n X_t dt + p X_t dC_t$, so that
$X_t = X_0 e^{(n-\frac{p^2}{2}) t + p C_t}$, and look for solutions of the form $U_t X_t$ where $(U_t)_{t \geq 0}$ is a process of finite variation. Then $d(U_t X_t)=U_t (n X_t dt + p X_t dC_t) + X_t dU_t$, so that $U_tX_t$ is a solution as soon as $dU_t = m X^{-1}_t dt$. As a result we have
$$Z_t = X_t (Z_0 + m  \int_0^t X^{-1}_s ds)$$
with $X_s = e^{(n-\frac{p^2}{2}) s + p C_s}.$ 

Now let us look for a scaling factor $c$ such that $p C_{c u} = \sqrt{2} D_u$ for some Brownian motion $(D_t)_{t\geq 0}$. Obviously, we have 
$c=\frac{2}{p^2}$ and
$$X_u = e^{(n-\frac{p^2}{2}) c \frac{u}{c}   + \sqrt{2} D_\frac{u}{c}}.$$
So, within the notations of \cite{don2001}  we deduce that
$$\nu = (n-\frac{p^2}{2}) c=-\alpha a c = -a \frac{2}{\alpha \sigma^2},$$

$$Z_t = X_{c \frac{t}{c}} (Z_0 + m c  \int_0^{\frac{t}{c}} X^{-1}_{c u} du) = m c Y^{(\nu)}_\frac{t}{c}(\frac{Z_0}{m c})$$
where
$$m c = \alpha b \frac{2}{p^2} = \frac{2 b}{\alpha \sigma^2}$$ 
and $Y_t^{\nu}(x)$ satisfies the SDE
\begin{align}
Y_t=x+\sqrt{2}\int_0^t Y_udB_u + \int_0^t (1 + (1+\nu)Y_u)du
\end{align}

where $(B_t)_{t\geq 0}$ a Brownian motion. Combining these results we have the following proposition relating the volatility process of the model to the process $Y_t^{\nu}(x)$.
\begin{proposition}The instantaneous volatility $V_t$ can be expressed as a function of the process $Y_t^{(\nu)}$ through the relation
$$e^{-\alpha v_t}=V_t^{-\frac{\alpha}{2}} = q Y^{(\nu)}_\frac{t}{c}(\frac{V_0^{-\frac{\alpha}{2} }}{q} )$$
with
$$\nu = - \frac{2a}{\alpha \sigma^2},\quad  q = \frac{2 b}{\alpha \sigma^2} \textrm{ and }  c = \frac{2}{\alpha \sigma^2}.$$

\end{proposition}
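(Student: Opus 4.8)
The plan is to recognize that this proposition merely packages the computations carried out in the preceding paragraphs, so the proof reduces to assembling them carefully and reading off the three constants. First I would start from the definition $Z_t = e^{-\alpha v_t} = V_t^{-\alpha/2}$ and apply It\^o's formula to $dv_t = (a - b e^{\alpha v_t})\,dt + \sigma\,dw_{2,t}$; the second-order term contributes the $\frac{\alpha^2\sigma^2}{2}e^{-\alpha v_t}$ piece, which after collecting terms yields the affine SDE $dZ_t = (m + n Z_t)\,dt + p Z_t\,dC_t$ with $m = \alpha b$, $n = \frac{\alpha^2\sigma^2}{2} - \alpha a$, $p = \alpha\sigma$ and $C_t = -w_{2,t}$. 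Since this equation is linear in $Z$, I would solve it by variation of constants: letting $X_t = e^{(n - p^2/2)t + pC_t}$ solve the homogeneous part and seeking a solution of the form $U_t X_t$ with $U$ of finite variation gives $Z_t = X_t\big(Z_0 + m\int_0^t X_s^{-1}\,ds\big)$, with $Z_0 = V_0^{-\alpha/2}$.

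Next I would perform the deterministic time change $s = cu$ with $c = 2/p^2$. By Brownian scaling, $D_u := C_{cu}/\sqrt{c}$ is again a Brownian motion and $pC_{cu} = \sqrt{2}\,D_u$, so that $\tilde X_u := X_{cu} = e^{\nu u + \sqrt{2}D_u}$ with $\nu = (n - p^2/2)c = -\alpha a c$. Substituting $s = cw$ in the integral turns the solution into $\tilde Z_u := Z_{cu} = \tilde X_u\big(Z_0 + mc\int_0^u \tilde X_w^{-1}\,dw\big)$. A short It\^o computation then shows $d\tilde X_u = (1+\nu)\tilde X_u\,du + \sqrt{2}\,\tilde X_u\,dD_u$, and feeding this into the product rule gives $d\tilde Z_u = [mc + (1+\nu)\tilde Z_u]\,du + \sqrt{2}\,\tilde Z_u\,dD_u$. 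Setting $q := mc$ and $Y_u := \tilde Z_u/q$, this is exactly the canonical SDE $dY_u = [1 + (1+\nu)Y_u]\,du + \sqrt{2}\,Y_u\,dD_u$ defining $Y^{(\nu)}$, with initial value $Y_0 = Z_0/q$. Undoing the time change ($u = t/c$) and substituting $Z_t = V_t^{-\alpha/2}$ gives the claimed identity $V_t^{-\alpha/2} = q\,Y^{(\nu)}_{t/c}(V_0^{-\alpha/2}/q)$, and plugging in $p = \alpha\sigma$ yields $\nu = -2a/(\alpha\sigma^2)$, $q = 2b/(\alpha\sigma^2)$ and $c = 2/p^2$.

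The conceptual content here is light, so the main obstacle is purely the bookkeeping of the It\^o corrections through two successive transformations: the nonlinear change of variable $Z = e^{-\alpha v}$ must produce exactly the $\frac{\alpha^2\sigma^2}{2}$ drift contribution, and the time change must be handled so that the $\tfrac12$-It\^o term arising from $e^{\sqrt{2}D_u}$ supplies precisely the $+1$ in the drift coefficient $(1+\nu)$ while the noise normalizes to $\sqrt{2}\,dD_u$. I would also be careful to invoke the existence and pathwise uniqueness already established for the variance SDE, which guarantees that the explicit process constructed here is genuinely the solution $V_t$ and not merely some process with the same law, and to note that the identification $q = mc$ is forced precisely by requiring the constant drift term of the canonical $Y^{(\nu)}$ equation to equal one.
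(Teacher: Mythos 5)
Your argument is correct and follows exactly the paper's own route: It\^o's formula applied to $Z_t=e^{-\alpha v_t}$ to obtain the affine SDE with $m=\alpha b$, $n=\frac{\alpha^2\sigma^2}{2}-\alpha a$, $p=\alpha\sigma$, then variation of constants, then the Brownian-scaling time change $c=2/p^2$ identifying $Z_{cu}/(mc)$ with the canonical $Y^{(\nu)}$ process; your explicit It\^o verification that the time-changed, rescaled process satisfies $dY_u=[1+(1+\nu)Y_u]\,du+\sqrt{2}\,Y_u\,dD_u$ is just a slightly more careful version of what the paper asserts by inspection. The only point worth flagging is that your (correct) value $c=2/p^2=2/(\alpha^2\sigma^2)$ agrees with the paper's derivation but not with the value $c=2/(\alpha\sigma^2)$ printed in the proposition, which appears to be a typo (the two coincide only when $\alpha=1$).
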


\paragraph{The Green function of $V$:}

The Green function $u_{\lambda}(x, y) $ associated to $Y_t^{(\nu)}(x)$ has been computed in Theorem 3.1 \cite{don2001}, so we have the Green function of $v_t$ or $V_t$:

$$u_{\lambda}(x, y) = \frac{\Gamma(\frac{\mu+\nu}{2})}{\Gamma(1+\mu)} (\frac{1}{y})^{1-\nu} e^{-\frac{1}{y}} [1(y\leq x)
\phi_1(x) \phi_2(y)+ 1(x<y) \phi_2(x) \phi_1(y)]  $$
where 
$$\mu=\sqrt{\nu^2+4 \lambda}$$
and
$$\phi_1(x)= \left(\frac{1}{x}\right)^{\frac{\mu+\nu}{2}} \Phi\left(\frac{\mu+\nu}{2}, 1+\mu;\frac{1}{x}\right),  $$

$$\phi_2(x)= \left(\frac{1}{x}\right)^{\frac{\mu+\nu}{2}} \Psi\left(\frac{\mu+\nu}{2}, 1+\mu;\frac{1}{x}\right).  $$

The function $\Phi$ is the confluent hypergeometric function of the first kind, which has the integral representation:
\begin{align*}
\Phi(\alpha, \gamma;z) = \frac{\Gamma(\gamma)}{\Gamma(\alpha) \Gamma(\gamma-\alpha)} \int_0^1 e^{z u} u^{\alpha-1} (1-u)^{\gamma-\alpha-1} du, 
\end{align*}

and $\Psi$ is the confluent hypergeometric function of the second kind. We will use the following integral representation, called the Barnes integral representation, and given by (\cite{DLMF} 13.4.18)
\begin{align}
\Psi(a-1, b;z) = \frac{z^{1-b} e^z}{2i\pi} \int_{-i\infty}^{i\infty} \frac{\Gamma(b-1+t) \Gamma(t)}{\Gamma(a-1+t)} z^{-t} dt \label{BarnesPsi}
\end{align}

where the contour of integration passes on the right of the poles of the integrand. The function $\Phi$ is also denoted $M$ or ${}_1F_1$ and also called the Kummer function while $\Phi$ is also written $U$ and is the Tricomi function.

\paragraph{The moments of $V^{-\frac{\alpha}{2}}$:}

It is also easy to compute the $l^{th}$ moment $M^{(l)}$ of $Z$ as we have
$$dZ^l_t = l Z^{l-1}_t ((m+n Z_t) dt + p Z_t dC_t) + l (l-1)Z^{l-2}_t p^2 Z^2_t dt$$
whence
$$dM^{(l)}_t = (m l M^{(l-1)}_t + (n l+l(l-1)p^2) M^{(l)}_t)dt.$$


\subsubsection{The pricing of Variance Swaps}

We are interested in the following quantity
$$t \textsc{vs}(t) = \int_0^t \mathbb{E}[ V_s] ds = \int_0^t \mathbb{E}\left[ Z^{-\frac{2}{\alpha}}_s\right] ds = q^{-\frac{2}{\alpha}} \int_0^t \mathbb{E}[\frac{1}{Y^{(\nu)}_\frac{s}{c}(\frac{V_0^{-\frac{\alpha}{2}}}{q})^\frac{2}{\alpha}}] ds = c q^{-\frac{2}{\alpha}} \int_0^\frac{t}{c} \mathbb{E}[\frac{1}{Y^{(\nu)}_r(\frac{V_0^{-\frac{\alpha}{2}}}{q})^\frac{2}{\alpha}}] dr.$$
It is involved in the pricing of variance swap that is an important financial volatility product.  
\paragraph{The Laplace transform of $\textsc{vs}(t)$ via the Green function:}
By using  the standard algebraic operations on Laplace transforms we know that the Laplace transform of the integral is the Laplace transform of the integrand divided by $\lambda$. So the first step is to compute

$$I(x) = \int_0^\infty e^{-\lambda r} \mathbb{E}[\frac{1}{Y^{(\nu)}_r(x)^\frac{2}{\alpha}}] dr = \int_0^\infty y^{-\frac{2}{\alpha}} u_{\lambda}(x, y) dy. $$

We have $\frac{\Gamma(1+\mu)}{\Gamma(\frac{\mu+\nu}{2})} I = \phi_1(x) I_2 + \phi_2(x) I_1 $ with
$$I_1=\int_x^\infty  \left(\frac{1}{y}\right)^{\frac{2}{\alpha}+1-\nu} e^{-\frac{1}{y}} \phi_1(y) dy = \int_x^\infty  \left(\frac{1}{y}\right)^{\frac{2}{\alpha}+1+\frac{\mu-\nu}{2}} e^{-\frac{1}{y}} \Phi\left(\frac{\mu+\nu}{2}, 1+\mu;\frac{1}{y}\right) dy,$$

$$I_2=\int_0^x  \left(\frac{1}{y}\right)^{\frac{2}{\alpha}+1-\nu} e^{-\frac{1}{y}} \phi_2(y) dy= \int_0^x  \left(\frac{1}{y}\right)^{\frac{2}{\alpha}+1+\frac{\mu-\nu}{2}} e^{-\frac{1}{y}} \Psi\left(\frac{\mu+\nu}{2}, 1+\mu;\frac{1}{y}\right) dy.$$

Then
$$\textsc{vs}(t)= \frac{c}{q t} \mathcal{L}^{-1}\left(\frac{1}{\lambda} I(x=\frac{1}{q}V_0^{-\frac{\alpha}{2}})\right)(\frac{t}{c})$$
where $\mathcal{L}^{-1}$ denotes the inverse Laplace transform.

\subsubsection{Computation of $I_1$}
By the change of variable $z=\frac{1}{y}$, $I_1=\int_0^{{\frac{1}{x}}}  z^{\frac{\mu-\nu}{2}+\frac{2}{\alpha}-1} e^{-z} \Phi(\frac{\mu+\nu}{2}, 1+\mu;z) dz$. Let us introduce $a=1+\frac{\mu+\nu}{2}, b=1+\mu$, so that $b-a=\frac{\mu-\nu}{2}$. By Kummer's tranformation
$e^{-z} \Phi(a-1, b;z)=\Phi(b-a+1, b;-z)$ and 
$$I_1=\int_0^{\frac{1}{x}} z^{b-a+\frac{2}{\alpha}-1}  \Phi(b-a+1, b;-z) dz.$$
Therefore, by Fubini's theorem as in \cite{fredholm}, $I_1=\sum_{n=0}^\infty \frac{(b-a+1)_n}{(b)_n n!} (-1)^n \int_0^{\frac{1}{x}} z^{b-a+\frac{2}{\alpha}-1+n} dz,$ which leads to
$$I_1= x^{-b+a-\frac{2}{\alpha}} \sum_{n=0}^\infty \frac{(b-a+1)_n}{(b-a+\frac{2}{\alpha}+n) (b)_n n!} (-1)^n x^{-n} $$
Note that $(b-a+\frac{2}{\alpha}+n) = \frac{(b-a+\frac{2}{\alpha}+1)_n (b-a+\frac{2}{\alpha})}{(b-a+\frac{2}{\alpha})_n}$ so that eventually
\begin{align*}
I_1&= \frac{ x^{-b+a-\frac{2}{\alpha}}}{(b-a+\frac{2}{\alpha})} \sum_{n=0}^\infty \frac{(b-a+1)_n (b-a+\frac{2}{\alpha})_n}{(b-a+\frac{2}{\alpha}+1)_n (b)_n n!} (-1)^n x^{-n} \\
&= \frac{ x^{-b+a-\frac{2}{\alpha}}}{(b-a+\frac{2}{\alpha})} H\left(\left[b-a+1, b-a+\frac{2}{\alpha}\right], \left[b, b-a+1+\frac{2}{\alpha}\right], -\frac{1}{x}\right)
\end{align*}
 
where $H$ is the {\it generalized hypergeometric} function.

\subsubsection{Computation of $I_2$}
\paragraph{$I_2$ as a complex integral.}
We have in the same way $I_2=\int_{\frac{1}{x}}^\infty z^{b-a+\frac{2}{\alpha}-1} e^{-z} \Psi(a-1, b;z)dz$. Thanks to the Barnes integral representation Eq.\eqref{BarnesPsi}   we have 
$$e^{-z} z^{b-a+\frac{2}{\alpha}-1} \Psi(a-1, b;z) = \frac{1}{2i\pi} \int_{-i\infty}^{i\infty} \frac{\Gamma(b-1+t) \Gamma(t)}{\Gamma(a-1+t)} z^{\frac{2}{\alpha}-(a+t)}  dt.  $$
Moreover we know that the integral converges locally uniformly in $z$, so that we can apply Fubini's theorem and permute the integrals.
Observe now that $a=1+\frac{\mu+\nu}{2}=1+\frac{\sqrt{\nu^2+4\lambda}-\left|\nu\right|}{2}$, so that $1-a<0$.
If $\lambda$ is large enough so that $1+\frac{2}{\alpha}-a<0$, then the inner integral is finite and
$$I_2 = - \frac{1}{2i\pi} \int_{-i\infty}^{i\infty} \frac{\Gamma(b-1+t) \Gamma(t)}{\Gamma(a-1+t) (1+\frac{2}{\alpha}-(a+t))} {(\frac{1}{x})}^{1+\frac{2}{\alpha}-(a+t)} dt.  $$
This formula is valid as soon as $a>1+\frac{2}{\alpha}$, which amounts after a simple computation to $$\lambda>\lambda^*=\frac{4}{\alpha^2}+\frac{2\left|\nu\right|}{\alpha}.$$

Writing $(1+\frac{2}{\alpha}-(a+t)) = \frac{\Gamma(2+\frac{2}{\alpha}-(a+t))}{\Gamma(1+\frac{2}{\alpha}-(a+t))}=\frac{\Gamma(1-(a-\frac{2}{\alpha}-1)-t)}{\Gamma(1-(a-\frac{2}{\alpha})-t)}$ and recalling the definition of Meijer $G$ function, $I_2$ looks like
$$-{(\frac{1}{x})}^{1+\frac{2}{\alpha}-a} G\left([[a-\frac{2}{\alpha}], [a-1]], [[0, b-1], [a-\frac{2}{\alpha}-1]], \frac{1}{x}\right).  $$
Nevertheless, the paths of integration are not the same for the two formulas, since the defining path in the Meijer $G$ function is not on the right of all the poles of the integrand.

\paragraph{An explicit hypergeometric series for $I_2$:}
At this stage the natural step is to apply the theorem of residues to get a series from the above complex integrals. The poles of the integrand are located:
\begin{itemize}
\item at $t=1+\frac{2}{\alpha}-a (<0)$, with residue $-\frac{\Gamma(b-1+t) \Gamma(t)}{\Gamma(a-1+t)}$. 
\item at $t=-n, n \in \mathbb{N}$, with residue $\frac{\Gamma(b-1+t)}{\Gamma(a-1+t)(1+\frac{2}{\alpha}-(a+t)) n!} (-1)^n y^{1+\frac{2}{\alpha}-(a+t)}$.
\item at $t=-n+1-b, n \in \mathbb{N}$, with residue $\frac{\Gamma(t)}{\Gamma(a-1+t)(1+\frac{2}{\alpha}-(a+t)) n!} (-1)^n y^{1+\frac{2}{\alpha}-(a+t)}$.
\end{itemize}
so that by Cauchy's residue theorem we get
\begin{align}
I_2 &=\frac{\Gamma(b-a +\frac{2}{\alpha}) \Gamma(1+\frac{2}{\alpha}-a)}{\Gamma(\frac{2}{\alpha})}\nonumber\\
&+\sum_{n=0}^\infty \frac{(-1)^{n+1} y^{1+\frac{2}{\alpha}+n-a}}{n!} \left(\frac{\Gamma(b-1-n)}{\Gamma(a-1-n)(1+\frac{2}{\alpha}+n-a)} + \frac{y^{b-1} \Gamma(1-b-n)}{\Gamma(a-b-n)(\frac{2}{\alpha}+b-a+n)}\right) 
\end{align}
and $I_2$ can be computed easily by making explicit the recurrences between successive terms of the two series. Calls to the $\Gamma$ function are only required for the constant and index zero terms.

\subsubsection{Final formula for I}
Since $I(x) = \frac{\Gamma(\frac{\mu+\nu}{2})}{\Gamma(1+\mu)}(\phi_1(x) I_2(x) + \phi_2(x) I_1(x)) $ we get the final formula for $I$ given by the following proposition.

\begin{proposition}
For any $\lambda>\lambda^*$ where $\lambda^*=\frac{4}{\alpha^2}+\frac{2\left|\nu\right|}{\alpha}$,
$$I=\frac{\Gamma(a-1)}{\Gamma(b)}(y^{a-1} I_2  \Phi(a-1, b, y)  +y^{b+\frac{2}{\alpha}-1} \frac{\Psi(a-1, b, y)}{(b-a+\frac{2}{\alpha})} h)$$
where $a=1+\frac{\mu+\nu}{2}, b=1+\mu$,  $y=\frac{1}{x}$ and
$$h=H([b-a+1, b-a+\frac{2}{\alpha}], [b, b-a+1+\frac{2}{\alpha}], -y) $$
\begin{align}
I_2 &= \frac{\Gamma(b-a +\frac{2}{\alpha}) \Gamma(1+\frac{2}{\alpha}-a)}{\Gamma(\frac{2}{\alpha})}\nonumber\\
&+\sum_{n=0}^\infty \frac{(-1)^{n+1} y^{1+\frac{2}{\alpha}+n-a}}{n!} \left(\frac{\Gamma(b-1-n)}{\Gamma(a-1-n)(1+\frac{2}{\alpha}+n-a)} + \frac{y^{b-1} \Gamma(1-b-n)}{\Gamma(a-b-n)(\frac{2}{\alpha}+b-a+n)}\right).
\end{align} 
\end{proposition}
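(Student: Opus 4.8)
The plan is to assemble the proposition from the three ingredients already in hand --- the Green-function representation of $I(x)$, the closed form for $I_1$, and the residue series for $I_2$ --- so essentially no new analysis is required. The residual work is a careful change of notation, some exponent bookkeeping, and the tracking of the domain of validity.

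First I would recall the decomposition obtained when the Green function $u_{\lambda}(x,y)$ of Theorem 3.1 of \cite{don2001} is inserted into $I(x)=\int_0^\infty y^{-\frac{2}{\alpha}} u_{\lambda}(x,y)\,dy$. Splitting the integral at $y=x$ according to the indicators $1(y\le x)$ and $1(x<y)$, the factor $\phi_1(x)$ comes out of the $y\le x$ piece and $\phi_2(x)$ out of the $y>x$ piece, and the two remaining integrals are exactly $I_2$ and $I_1$ as defined above. This gives
$$I(x) = \frac{\Gamma(\frac{\mu+\nu}{2})}{\Gamma(1+\mu)}\bigl(\phi_1(x)\,I_2 + \phi_2(x)\,I_1\bigr),$$
which is the identity already recorded at the head of the ``Final formula'' paragraph.

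Next I would pass to the variables $a=1+\frac{\mu+\nu}{2}$, $b=1+\mu$ and $y=\frac{1}{x}$, so that $\frac{\mu+\nu}{2}=a-1$ and the prefactor becomes $\frac{\Gamma(a-1)}{\Gamma(b)}$. Reading off the definitions of $\phi_1,\phi_2$ gives $\phi_1(x)=y^{a-1}\Phi(a-1,b;y)$ and $\phi_2(x)=y^{a-1}\Psi(a-1,b;y)$. I then substitute the closed form for $I_1$: rewriting $x^{-b+a-\frac{2}{\alpha}}=y^{\,b-a+\frac{2}{\alpha}}$ and $-\frac{1}{x}=-y$, the $I_1$ computation yields $I_1=\frac{y^{\,b-a+\frac{2}{\alpha}}}{b-a+\frac{2}{\alpha}}\,h$ with $h$ the stated generalized hypergeometric function. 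Collecting terms, the first summand is $y^{a-1}\Phi(a-1,b;y)\,I_2$ and the second is $y^{a-1}\Psi(a-1,b;y)\cdot\frac{y^{\,b-a+\frac{2}{\alpha}}}{b-a+\frac{2}{\alpha}}\,h = y^{\,b+\frac{2}{\alpha}-1}\,\frac{\Psi(a-1,b;y)}{b-a+\frac{2}{\alpha}}\,h$, where the single nontrivial step is the exponent addition $y^{a-1}\,y^{\,b-a+\frac{2}{\alpha}}=y^{\,b+\frac{2}{\alpha}-1}$. Keeping $I_2$ as given by the residue series then reproduces the displayed formula verbatim.

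The only genuinely delicate point is the range of validity, and I expect this to be the one place deserving care, rather than the algebra. The decomposition of $I$ and the $I_1$ closed form carry no extra restriction: the integral defining $I_1$ converges at infinity because the integrand decays like $(1/y)^{\frac{2}{\alpha}+1+\frac{\mu-\nu}{2}}$ with $\mu\ge|\nu|$, and its series $h$ is a ${}_2F_2$-type function, hence entire in its argument. Thus the binding constraint is inherited entirely from the $I_2$ computation, where interchanging the Barnes contour integral with $\int_{1/x}^\infty$ and summing the residues required $a>1+\frac{2}{\alpha}$; translating this through $a=1+\frac{\sqrt{\nu^2+4\lambda}-|\nu|}{2}$ gives precisely $\lambda>\lambda^*=\frac{4}{\alpha^2}+\frac{2|\nu|}{\alpha}$. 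It is this condition that guarantees that the assembled series actually represents the Laplace transform $I$ on the claimed half-line.
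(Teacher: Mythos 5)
Your proposal is correct and follows exactly the paper's route: the paper's proof of this proposition is precisely the one-line assembly $I=\frac{\Gamma(\frac{\mu+\nu}{2})}{\Gamma(1+\mu)}(\phi_1(x)I_2+\phi_2(x)I_1)$ combined with the closed forms for $I_1$ and $I_2$ derived in the preceding subsections, with the constraint $\lambda>\lambda^*$ inherited from the requirement $a>1+\frac{2}{\alpha}$ in the $I_2$ computation. Your exponent bookkeeping ($y^{a-1}\,y^{\,b-a+\frac{2}{\alpha}}=y^{\,b+\frac{2}{\alpha}-1}$) and the translation of $a>1+\frac{2}{\alpha}$ into $\lambda>\frac{4}{\alpha^2}+\frac{2|\nu|}{\alpha}$ both check out.
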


\subsubsection{Short term behaviour}

We now analyse the short term behaviour of the instantaneous volatility. We start from the formula
$$V_t = \frac{V_0 \exp{2a t + 2\sigma w_{2,t}}}{(1+\alpha b V_0^\frac{\alpha}{2} \int_0^t \exp{\alpha (a s +\sigma w_{2,s}) ds})^\frac{2}{\alpha}}.$$
By introducing  the exponential martingale $e^{2\sigma w_{2,t} - \frac{(2\sigma)^2}{2}t}$ we get by Girsanov's theorem
$$\mathbb{E}[V_t] = V_0 e^{(2a+ \frac{(2\sigma)^2}{2}) t}  \mathbb{E}^Q\left[\left( 1+\alpha b V_0^\frac{\alpha}{2} \int_0^t \exp{\alpha ((a+ \sigma^2) s +\sigma \tilde{w}_{2,s}) ds}\right)^{\frac{-2}{\alpha}} \right]$$ with $\tilde w_{2,t}=w_{2,t}-2\sigma t$ a Brownian motion under $Q$.
For a given $t$, the set of paths such that the time integral is larger than an arbitrary small level becomes exponentially small in probability so that
$$\mathbb{E}[V_t] \sim V_0 e^{(2a+ \frac{(2\sigma)^2}{2}) t} \mathbb{E}^Q\left[1-2b V_0 ^\frac{\alpha}{2} \int_0^t \exp{\alpha ((a+ \sigma^2) s +\sigma \tilde w_{2,s}) ds}\right].$$

Now $\mathbb{E}^Q[\int_0^t \exp{\alpha ((a+ \sigma^2) s +\sigma \tilde{w}_{2,s})} ds] = \frac{e^{(\alpha (a + \sigma^2) +\frac{\alpha^2 \sigma^2}{2})t} -1}{\alpha (a+\sigma^2)+\frac{\alpha^2 \sigma^2}{2}}$.
Therefore,  in the following proposition the second statement results from the first one by integration.
\begin{proposition}
As $t \to 0$
\begin{itemize}
\item $\mathbb{E}[V_t] \sim V_0 (1+(2a+\frac{(2\sigma)^2}{2}-2bV_0^{\frac{\alpha}{2}}) t)$
\item $\textsc{vs}(t) \sim V_0 (1+(2a+\frac{(2\sigma)^2}{2}-2bV_0^{\frac{\alpha}{2}})\frac{t}{2} )$
\end{itemize}

\end{proposition}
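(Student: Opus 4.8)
The plan is to take the exact expression for $\mathbb{E}[V_t]$ obtained just above via Girsanov's theorem as the starting point and to expand it to first order in $t$; the only delicate point is to justify that linearising the power term produces an error of order $o(t)$, which replaces the informal ``exponentially small probability'' remark by an honest estimate. Write $J_t = \int_0^t \exp(\alpha((a+\sigma^2)s + \sigma \tilde w_{2,s}))\,ds$ and $f(x) = (1+x)^{-2/\alpha}$, so that
$$\mathbb{E}[V_t] = V_0\, e^{(2a + \frac{(2\sigma)^2}{2})t}\, \mathbb{E}^Q\!\left[f(\alpha b V_0^{\alpha/2} J_t)\right].$$
First I would expand the deterministic prefactor, $e^{(2a + 2\sigma^2)t} = 1 + (2a + 2\sigma^2)t + O(t^2)$, recalling $2\sigma^2 = \frac{(2\sigma)^2}{2}$.

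Next I would Taylor-expand $f$ at the origin as $f(x) = 1 - \frac{2}{\alpha}x + R(x)$. Since $f''(x) = \frac{2}{\alpha}(\frac{2}{\alpha}+1)(1+x)^{-2/\alpha-2}$ is bounded on $[0,\infty)$ by its value at $0$, Taylor's theorem yields the \emph{global} quadratic bound $|R(x)| \leq C x^2$ for all $x \geq 0$, with $C = \frac{1}{\alpha}(\frac{2}{\alpha}+1)$. Taking expectations,
$$\mathbb{E}^Q[f(\alpha b V_0^{\alpha/2} J_t)] = 1 - 2b V_0^{\alpha/2}\, \mathbb{E}^Q[J_t] + \mathbb{E}^Q[R(\alpha b V_0^{\alpha/2} J_t)].$$
The first correction is controlled by the closed form recorded above, $\mathbb{E}^Q[J_t] = \frac{e^{kt}-1}{k}$ with $k = \alpha(a+\sigma^2) + \frac{\alpha^2\sigma^2}{2}$, so that $\mathbb{E}^Q[J_t] = t + O(t^2)$.

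The main obstacle is the remainder. Using the global bound on $R$,
$$\left|\mathbb{E}^Q[R(\alpha b V_0^{\alpha/2} J_t)]\right| \leq C (\alpha b V_0^{\alpha/2})^2\, \mathbb{E}^Q[J_t^2],$$
and by Fubini $\mathbb{E}^Q[J_t^2] = \int_0^t\!\int_0^t \mathbb{E}^Q[\exp(\alpha((a+\sigma^2)(s+u) + \sigma(\tilde w_{2,s} + \tilde w_{2,u})))]\,ds\,du$; each exponential moment stays bounded as $s,u \to 0$, so the double integral is $O(t^2)$ and the remainder is $o(t)$. The quadratic bound together with this second-moment estimate is the crux: it bypasses any large-deviation argument. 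Multiplying the three factors gives $\mathbb{E}[V_t] = V_0(1 + (2a + \frac{(2\sigma)^2}{2} - 2b V_0^{\alpha/2})t) + O(t^2)$, the first assertion.

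Finally, the variance swap statement follows by integration, as flagged before the proposition: from $t\,\textsc{vs}(t) = \int_0^t \mathbb{E}[V_s]\,ds$ and the first-order expansion $\mathbb{E}[V_s] = V_0(1 + Ks) + O(s^2)$ with $K = 2a + \frac{(2\sigma)^2}{2} - 2b V_0^{\alpha/2}$, integrating over $[0,t]$ yields $\int_0^t \mathbb{E}[V_s]\,ds = V_0 t + V_0 K \frac{t^2}{2} + O(t^3)$, and dividing by $t$ produces $\textsc{vs}(t) = V_0(1 + K\frac{t}{2}) + O(t^2)$, the second assertion.
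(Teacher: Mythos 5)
Your proposal is correct and follows essentially the same route as the paper: the same Girsanov identity as starting point, linearisation of the factor $(1+\alpha b V_0^{\alpha/2}J_t)^{-2/\alpha}$, the closed form for $\mathbb{E}^Q[J_t]$, and integration for the variance-swap statement. The only difference is that where the paper justifies dropping the higher-order terms with an informal ``exponentially small in probability'' remark, you supply an honest estimate via the global quadratic Taylor bound on $(1+x)^{-2/\alpha}$ together with $\mathbb{E}^Q[J_t^2]=O(t^2)$, which tightens the paper's argument without changing its structure.
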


\paragraph{Short term behaviour when $\alpha=2$:}
There is an easy majorization in case $\alpha=2$, which also provides an excellent approximation for short term maturities: by using the concavity of the logarithm and Jensen's inequality
$$t \textsc{vs}(t) = \mathbb{E}^P[\frac{1}{2b}\ln( 1+2b V_0 A_t)] < \frac{1}{2b} \ln( 1+2b V_0  \mathbb{E}^P[A_t])$$

where $\mathbb{E}^P[A_t] = \frac{e^{(2a+2\sigma^2)t} -1}{(2a+2\sigma^2)}$. This will yield an excellent short term approximation because $\ln{1+x} \sim x$ near $0$ and  $A_t$ is small in probability for small $t$.
\begin{proposition} ($\alpha=2$)
Let $f(t) =\frac{1}{2bt} \ln{( 1+2b V_0 \frac{e^{(2a+2\sigma^2)t} -1}{(2a+2\sigma^2)}})	$.
Then $\textsc{vs}(t)<f(t)$ for every $t>0$. Moreover as $t \sim 0$,
$$\textsc{vs}(t) \sim f(t) \sim V_0 (1+(a+\sigma^2-bV_0)t).$$
\end{proposition}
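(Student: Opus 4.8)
The plan is to treat the two assertions separately: the inequality is a one-line consequence of the \emph{strict} Jensen inequality, while the short-term equivalence is a Taylor-expansion matching against the short-term proposition already proved. For the inequality $\textsc{vs}(t)<f(t)$ I would start from the explicit pathwise representation of the integrated variance at $\alpha=2$. Specialising $I(t)=\int_0^t e^{\alpha v_s}\,ds$ to $\alpha=2$, so that $e^{\alpha v_s}=V_s$, the closed form derived earlier gives
$$\int_0^t V_s\,ds = \frac{1}{2b}\ln\!\bigl(1+2bV_0A_t\bigr),\qquad A_t:=\int_0^t e^{2as+2\sigma w_{2,s}}\,ds.$$
Taking expectations yields $t\,\textsc{vs}(t)=\mathbb{E}^P\!\bigl[\tfrac{1}{2b}\ln(1+2bV_0A_t)\bigr]$. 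Since $x\mapsto\tfrac{1}{2b}\ln(1+2bV_0x)$ has strictly negative second derivative on $[0,\infty)$, it is strictly concave, and for every $t>0$ the variable $A_t$ is non-degenerate because $\sigma>0$ forces non-trivial fluctuations of $w_2$; hence the strict form of Jensen's inequality gives $t\,\textsc{vs}(t)<\tfrac{1}{2b}\ln\!\bigl(1+2bV_0\,\mathbb{E}^P[A_t]\bigr)$. It then remains to identify $\mathbb{E}^P[A_t]=\tfrac{e^{(2a+2\sigma^2)t}-1}{2a+2\sigma^2}$ by Fubini together with $\mathbb{E}[e^{2\sigma w_{2,s}}]=e^{2\sigma^2 s}$; dividing by $t$ reproduces $f(t)$, which establishes $\textsc{vs}(t)<f(t)$ for all $t>0$.

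For the asymptotics I would run two independent expansions and then match them. The expansion of $\textsc{vs}(t)$ is already in hand: specialising the preceding short-term proposition to $\alpha=2$ (using $V_0^{\alpha/2}=V_0$ and $\tfrac{(2\sigma)^2}{2}=2\sigma^2$) gives $\textsc{vs}(t)\sim V_0\bigl(1+(a+\sigma^2-bV_0)t\bigr)$. For $f(t)$ I would set $\beta:=2a+2\sigma^2$ and expand $\tfrac{e^{\beta t}-1}{\beta}=t+\tfrac{\beta}{2}t^2+O(t^3)$, so that the argument of the logarithm equals $1+u$ with $u=2bV_0t+bV_0\beta t^2+O(t^3)$. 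The key bookkeeping point is that, because of the prefactor $\tfrac{1}{2bt}$, the $O(t)$ correction in $f(t)$ is produced entirely by the $t^2$-coefficient of $\ln(1+u)$; one must therefore carry $\ln(1+u)=u-\tfrac{u^2}{2}+O(t^3)$ to second order, with $u^2=4b^2V_0^2t^2+O(t^3)$. Collecting terms gives $\ln(1+u)=2bV_0t+(bV_0\beta-2b^2V_0^2)t^2+O(t^3)$, and dividing by $2bt$ yields $f(t)=V_0\bigl(1+(a+\sigma^2-bV_0)t\bigr)+O(t^2)$. Since the two first-order expansions coincide, $\textsc{vs}(t)\sim f(t)\sim V_0\bigl(1+(a+\sigma^2-bV_0)t\bigr)$.

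The calculations are essentially routine, and the only genuinely delicate points are two. First, the \emph{strictness} in Jensen must be argued, not merely invoked: one needs that $A_t$ is not almost surely constant for $t>0$, which is immediate once $\sigma>0$. Second, the logarithm in $f(t)$ must be expanded to second order, since truncating at first order would discard the whole $O(t)$ term and give only $f(t)\to V_0$. A degenerate sub-case worth a single remark is $a+\sigma^2=0$, where $\beta=0$ and $\tfrac{e^{\beta t}-1}{\beta}$ is read as its limit $t$; the expansion above passes continuously through this value and the stated conclusion is unaffected.
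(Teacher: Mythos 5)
Your proof is correct and follows essentially the same route as the paper: the strict Jensen inequality applied to $t\,\textsc{vs}(t)=\mathbb{E}^P\!\bigl[\tfrac{1}{2b}\ln(1+2bV_0A_t)\bigr]$ for the majorization, and a match against the preceding short-term proposition for the asymptotics. Your explicit second-order expansion of $\ln(1+u)$ is in fact necessary --- the paper's one-line appeal to $\ln(1+x)\sim x$ would, taken literally, miss the $-bV_0$ contribution to the first-order coefficient of $f(t)$ --- so your write-up is the more careful version of the same argument.
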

The last approximation is useful for practical purposes.
\subsubsection{Long term behaviour when $a>0$}
We start also from the formula
$$V_t = \frac{V_0 \exp{2a t + 2\sigma w_{2,t}}}{(1+\alpha b V_0^{\frac{\alpha}{2}} \int_0^t \exp{\alpha (a s +\sigma w_{2,s}) ds})^\frac{2}{\alpha}}.$$
Since $a>0$, the behaviour of the average $\int_0^t e^{\alpha (a s +\sigma w_{2,s})} ds$ will go very fast to infinity as $t \to \infty$. It is clear in particular that $\int_0^t e^{\alpha (a s +\sigma w_{2,s})} ds$ will become much larger than 1 so that 
$$V_t \sim \frac{V_0 e^{2a t + 2\sigma w_{2,t}}}{(\alpha b V_0^\frac{\alpha}{2} \int_0^t \exp{\alpha (a s +\sigma w_{2,s}) ds})^\frac{2}{\alpha}}.$$
Now this simplifies to $\frac{1}{(\alpha b)^\frac{2}{\alpha}} \frac{1}{(\int_0^t \exp{\alpha (a (s-t) +\sigma (w_{2,s}-w_{2,t})} ds)^\frac{2}{\alpha}}$, and 
the whole point is to observe that by time-reversal we will get an average with a {\it negative} drift, whose behaviour at infinity converges to the inverse of a Gamma law:  by scaling $\int_0^t e^{\alpha (a (s-t) +\sigma (w_{2,s}-w_{2,t})} ds \overset{d}{=} \frac{4}{\alpha^2 \sigma^2} \int_0^{t \alpha^2 \frac{\sigma^2}{4}} e^{2(-\frac{2a}{\alpha \sigma^2}u +B_u)}du$ for some Brownian motion $B$. With the notations of \cite{dufresne1998} with a drift $\mu=\frac{2a}{\alpha \sigma^2}$ we have therefore
$$V_t \sim (\alpha b)^{-\frac{2}{\alpha}} ( \frac{2}{\alpha^2 \sigma^2})^{-\frac{2}{\alpha}} (2 A^{(-\mu)}_{t \alpha^2 \frac{\sigma^2}{4}})^{-\frac{2}{\alpha}},$$
which entails
$$V_t \to (\alpha b)^{-\frac{2}{\alpha}} ( \frac{2}{\alpha^2 \sigma^2})^{-\frac{2}{\alpha}}(\text{Gamma}(\mu, 1))^\frac{2}{\alpha}.$$
Observing that the expectations will converge too thanks to the monotone convergence theorem, we obtain the following proposition.
\begin{proposition}
Assume $a>0$. As $t \to \infty$, $$\mathbb{E}[V_t], \textsc{vs}(t) \to  ( \frac{2b}{\alpha \sigma^2})^{-\frac{2}{\alpha}}\mathbb{E}[(\text{Gamma}(\mu, 1))^\frac{2}{\alpha}]$$
where $\mu=\frac{2a}{\alpha \sigma^2}$.
In particular, \begin{itemize}
\item For $\alpha=2$, $\mathbb{E}[V_t], \textsc{vs}(t) \to  \frac{a}{b}$.
\item For $\alpha=1$, $\mathbb{E}[V_t], \textsc{vs}(t) \to  (\frac{\sigma^2}{2 b})^2 \frac{2a}{\sigma^2}(1+\frac{2a}{\sigma^2})$.
\end{itemize}
\end{proposition}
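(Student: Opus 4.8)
The plan is to turn the distributional statement sketched in the paragraph just before the proposition into the two claimed limits of expectations, and then read off the special cases. The starting point is the representation derived there,
$$V_t \overset{d}{=} \left(\frac{2b}{\alpha\sigma^2}\right)^{-\frac{2}{\alpha}}\left(2 A^{(-\mu)}_{t\alpha^2\sigma^2/4}\right)^{-\frac{2}{\alpha}}, \qquad \mu=\frac{2a}{\alpha\sigma^2}>0,$$
where $A^{(-\mu)}_T=\int_0^T e^{2(B_u-\mu u)}\,du$ and I have already collapsed $(\alpha b)^{-2/\alpha}(2/\alpha^2\sigma^2)^{-2/\alpha}$ into the single constant $(2b/\alpha\sigma^2)^{-2/\alpha}$. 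First I would note that, pathwise in $B$, the functional $A^{(-\mu)}_T$ is nondecreasing in $T$ and increases to the a.s.\ finite limit $A^{(-\mu)}_\infty=\int_0^\infty e^{2(B_u-\mu u)}\,du$; finiteness holds precisely because $a>0$ makes the drift $-\mu$ strictly negative. Dufresne's identity from \cite{dufresne1998} then gives $(2A^{(-\mu)}_\infty)^{-1}\overset{d}{=}\mathrm{Gamma}(\mu,1)$, so letting $t\to\infty$ yields the convergence in distribution
$$V_t \overset{d}{\longrightarrow} V_\infty := \left(\frac{2b}{\alpha\sigma^2}\right)^{-\frac{2}{\alpha}}\big(\mathrm{Gamma}(\mu,1)\big)^{\frac{2}{\alpha}}.$$

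The genuinely delicate step, which the remark abbreviates as ``monotone convergence'', is to upgrade this weak limit to $\mathbb{E}[V_t]\to\mathbb{E}[V_\infty]$, since $x\mapsto x^{-2/\alpha}$ is unbounded at the origin and weak convergence alone does not transport expectations. I would proceed by domination. Keeping the exact (non-asymptotic) identity, i.e.\ carrying the ``$1$'' through the same manipulation, the correction term it produces is strictly positive, so dropping it only shrinks the denominator of $V_t$; in the time-reversed marginal representation this gives the pathwise bound $V_t\le \overline V_t:=(\alpha b)^{-2/\alpha}\big(\int_0^t e^{\alpha(-au+\sigma \widehat w_u)}\,du\big)^{-2/\alpha}$, where $\overline V_t$ is nonincreasing in $t$ and both $V_t$ and $\overline V_t$ converge a.s.\ to a common limit having the law of $V_\infty$. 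Provided $\mathbb{E}[\overline V_{t_0}]<\infty$ for one $t_0>0$, the bound $V_t\le \overline V_{t_0}$ for $t\ge t_0$ supplies an integrable majorant, and dominated convergence applied to this a.s.\ convergence gives $\mathbb{E}[V_t]\to\mathbb{E}[V_\infty]$. The single point that must be verified is $\mathbb{E}[\overline V_{t_0}]<\infty$, which amounts to the finiteness of the negative moment $\mathbb{E}\big[(\int_0^{t_0}e^{\alpha(-au+\sigma \widehat w_u)}\,du)^{-2/\alpha}\big]$; this holds because all negative moments of exponential functionals of Brownian motion with drift are finite. I regard this negative-moment/uniform-integrability check as the main obstacle, everything else being either already carried out in the excerpt or routine.

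For the variance-swap leg I would use the earlier identity $\textsc{vs}(t)=\frac1t\int_0^t\mathbb{E}[V_s]\,ds$ together with the elementary fact that a function converging to a limit $L$ has Cesàro averages converging to the same $L$; hence $\textsc{vs}(t)\to\mathbb{E}[V_\infty]$ as well. Finally, the two special cases follow by inserting the moment formula $\mathbb{E}[(\mathrm{Gamma}(\mu,1))^{2/\alpha}]=\Gamma(\mu+\tfrac2\alpha)/\Gamma(\mu)$, which is finite for every $\mu>0$. For $\alpha=2$ one has $\tfrac2\alpha=1$, $\mu=a/\sigma^2$, prefactor $\sigma^2/b$, giving $\tfrac{\sigma^2}{b}\cdot\mu=a/b$; for $\alpha=1$ one has $\tfrac2\alpha=2$, $\mu=2a/\sigma^2$, prefactor $(\sigma^2/2b)^2$, and $\Gamma(\mu+2)/\Gamma(\mu)=\mu(\mu+1)$, giving $(\tfrac{\sigma^2}{2b})^2\,\tfrac{2a}{\sigma^2}\big(1+\tfrac{2a}{\sigma^2}\big)$, exactly as claimed.
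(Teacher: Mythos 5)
Your proposal is correct and follows essentially the same route as the paper: the time-reversal and scaling identification of $V_t$ with $(\frac{2b}{\alpha\sigma^2})^{-2/\alpha}(2A^{(-\mu)}_{t\alpha^2\sigma^2/4})^{-2/\alpha}$, Dufresne's identity $(2A^{(-\mu)}_\infty)^{-1}\overset{d}{=}\mathrm{Gamma}(\mu,1)$, and the passage to expectations (which the paper dispatches with a one-line appeal to monotone convergence, and which you flesh out with an explicit integrable majorant and a negative-moment check). The Ces\`aro step for $\textsc{vs}(t)$ and the Gamma-moment evaluations of the two special cases match what the paper leaves implicit.
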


Note that this is consistent with the large time behaviour of the noiseless limit obtained in \ref{sec:noiseless} for the case $\alpha=2$. The noiseless limit in the above formula for $\alpha=1$ is $(\frac{a}{b})^2$, which is not that of $I(t)$ which is $\frac{a}{b}$ irrespective of $\alpha$: just note that $I(t)$ is {\it not} the integrated variance when $\alpha \neq 2$, so there is no contradiction or mysterious lack of continuity behaviour.

\subsection{Study of the Spot Process}

\subsubsection{A full-blown martingale}

Consider now the dynamic of the forward $(f_t)_{t \geq 0}$, it is defined by the stochastic exponential of the local martingale $L_t = \int_0^t e^{v_s}dw_{1,s}$. Then $<L>_t =\int_0^t e^{2v_s} ds$ and  Novikov's criterion tells us that $(f_t)_{t \geq 0}$ is a uniformly integrable martingale if $\mathbb{E}[\exp{\frac{<L>_t}{2}}]<\infty$. 

\paragraph{Case $\alpha=2$:} In this case

$$\exp{\frac{<L>_t}{2}} = \exp{\frac{1}{2} \int_0^t e^{2v_s} ds} = \exp{\frac{I(t)}{2}} = (1+2b \int_0^t \exp{2(v_0 + a s +\sigma w_{2,s})} ds)^{\frac{1}{4b}}.$$

Therefore, assuming $b>0$, $\exp{\frac{<L>_t}{2} }<(1+\frac{b}{\left|a\right|}  V_0 \exp{2\left|a\right| t} \exp{2\sigma w_{2,t}^*})^{\frac{1}{4b}}$ where $w_{2}^*$ denotes the running maximum of the Brownian motion $w_{2}$. 

Now $\exp{2\sigma w_{2,t}^*}\geq1$ and this is less than $\max{(1, \frac{b}{a}  V_0 \exp{2a t})}^{\frac{1}{4b}} \exp{\frac{2\sigma w_{2,t}^*}{4b}}.$
Since $w_{2,t}^*$ has the same law as $|w_{2,t}|$, this is integrable and 
$$\mathbb{E}[\exp{\frac{<L>_t}{2} }]<\infty.$$

\paragraph{Case $\alpha>2$:} In this case we have $<L>_t=\int_0^t e^{2v_s} ds\leq t^{1-\frac{2}{\alpha}} (\int_0^t e^{\alpha v_s} ds)^\frac{2}{\alpha}$ by Holder's inequality. Now
$$\int_0^t e^{\alpha v_s} ds = \frac{\ln{ (1+\alpha b \int_0^t V_0^\frac{\alpha}{2} \exp{\alpha (a s +\sigma w_{2,s})} ds)}}{\alpha b}.$$
To conclude note that since $\alpha>2$, $(\int_0^t e^{\alpha v_s} ds)^\frac{2}{\alpha}\leq \max{(1, \int_0^t e^{\alpha v_s} ds)}^\frac{2}{\alpha}\leq \max{(1, \int_0^t e^{\alpha v_s} ds)}$
and the equality
$$\max{(1, z)} = z+(1-z) 1(z<1)$$ 
tells us that $e^{\max{(1, \int_0^t e^{\alpha v_s} ds)}}\leq e^{\int_0^t e^{\alpha v_s} ds}e^1$ and we can conclude as above.

\paragraph{Case $\alpha<2$:} In this case, the mean reversion force is weaker and we expect that the log volatility may become large, and therefore also the forward $f$ in case of positive correlation.\\

We follow step-by-step the reasoning of \cite{jourdain2004}. First note that $\mathbb{E}[f_t] = f_0 \mathbb{E}[\mathcal{E}(\rho  \int_0^t \exp{v_s} dw_{2,s})]$.
Since $(f_t)_{t \geq 0}$ is a positive local martingale, it is a supermartingale and the map $t \rightarrow \mathbb{E}[f_t]$ is non-increasing. Therefore, $f_t$ is a martingale if and only if it is constantly equal to $f_0$, i.e. $\frac{\mathbb{E}[f_t]}{f_0}=1$. The quantity
$$\mathbb{E}[\mathcal{E}(\rho  \int_0^t \exp{v_s} dw_{2,s})]$$ turns out to be the probability of non explosion of a Markovian SDE associated to the initial one by means of Girsanov's theorem, and Feller's criterion for explosion provides then an explicit necessary and sufficient condition for this probability to be one.\\

Adopting for a while the notations of \cite{jourdain2004}, we denote $(w_{2,t})_{t \geq 0}$ by $(B_t)_{t \geq 0}$. Introduce the probability $Q$ under which $d\tilde{B}_t = dB_t - \frac{a}{\sigma} dt + \frac{b}{\sigma} V_0^\frac{\alpha}{2} \exp{\alpha \sigma B_t} dt$ is a Brownian motion. By Girsanov's theorem,
$\frac{dQ}{dP} = \mathcal{E}(L_T)$ with $L_t = \frac{a}{\sigma}  B_t -  \frac{b}{\sigma} V_0^\frac{\alpha}{2} \int_0^t \exp{\alpha \sigma B_s} dB_s$. By the Yamada-Watanabe theorem, the law of $(v, B)$ under $P$ is the same as the law of $(v_0+ \sigma B, \tilde{B})$ under $Q$, and
$$\mathbb{E}^P[\mathcal{E}(\rho \int_0^t \exp{v_s} dB_s)]=\mathbb{E}^Q[\mathcal{E}(\rho \sqrt{V_0} \int_0^t \exp{\sigma B_s} d\tilde{B}_s)].$$
This is equal to
$$\mathbb{E}^P[\mathcal{E}(\rho \sqrt{V_0} \int_0^t \exp{\sigma B_s} (dB_s  + (\frac{b}{\sigma} V_0^\frac{\alpha}{2} \exp{\alpha \sigma B_s}- \frac{a}{\sigma}) ds))
\mathcal{E}(\frac{a}{\sigma}  B_t -  \frac{b}{\sigma} V_0^\frac{\alpha}{2} \int_0^t \exp{\alpha \sigma B_s} dB_s)]$$
which rewrites as
$$\mathbb{E}^P[\mathcal{E}(\int_0^t (\rho \sqrt{V_0} \exp{\sigma B_s}-\frac{b}{\sigma} V_0^\frac{\alpha}{2} \exp{\alpha \sigma B_s}+\frac{a}{\sigma}) dB_s)]=\mathbb{E}^P[\mathcal{E}(\int_0^t b(B_s) dB_s)]$$ 
with 

$$b(z) = \rho \sqrt{V_0} \exp{\sigma z}-\frac{b}{\sigma} V_0^\frac{\alpha}{2} \exp{\alpha \sigma z}+\frac{a}{\sigma}.$$

The next step is to observe that $\mathbb{E}^P[\mathcal{E}(\int_0^t b(B_s) dB_s)] = P(\tau_\infty>t)$ where $\tau_\infty$ is the explosion time of the SDE
$$dZ_s = b(Z_s) ds + dB_s.$$

We can now apply the Feller criterion for explosions, which tells us that $P(\tau_\infty=\infty)=1$ if and only if
$$a(-\infty) = a(\infty) = \infty$$ where
$a(z)=\int_0^z p'(x) \int_0^x \frac{2}{p'(y)} dy dx$ where $p$ is any scale function of the process $(Z_t)_{t \geq 0}$.

Now the function $p(x) = \int_0^x \exp{-2 \int_0^y b(z) dz} dy$ is a scale function, and we are left with explicit computations.

Observe that $\int_0^y b(z) dz = \frac{\rho \sqrt{V_0}}{\sigma} (\exp{\sigma y}-1) -\frac{b V_0^\frac{\alpha}{2}}{\alpha \sigma^2} (\exp{\alpha \sigma y}-1) + \frac{a}{\sigma} y$
so that
$$p'(x) = C \exp{\left(-2 \frac{\rho \sqrt{V_0}}{{\sigma}} \exp{\sigma x}+2 \frac{b V_0^\frac{\alpha}{2}}{\alpha \sigma^2} \exp{\alpha \sigma x} -2 \frac{a}{\sigma} x\right)}$$ for some positive constant $C$.

\paragraph{Behaviour at $-\infty$:} $p'(x) \sim C \exp{-2 \frac{a}{\sigma}} x$  with also $\int_x^0 \frac{2}{p'(y)} dy$ which is positive and increasing as $x \rightarrow -\infty$, so that $a(-\infty) = \infty$ when $a>0$. This argument is still valid when $a=0$. When $a<0$, then
$\int_x^0 \frac{2}{p'(y)} dy \sim C^{-1} \exp{2 \frac{a}{\sigma}} x$ so that the integrand converges to the constant 1 and the integral diverges.

\paragraph{Behaviour at $+\infty$:} There again, $\int_x^0 \frac{2}{p'(y)} dy$ is positive and increasing as $x \rightarrow \infty$. 
The behaviour is driven by the terms in the outer exponential:
\begin{itemize}
\item When $\rho\leq0$, the exponential terms will dominate the linear one, and $a(\infty)=\infty$.
\item When $\rho>0$ and $\alpha > 1$, the second positive exponential will dominate the first negative one, and $a(\infty)=\infty$.
\item When $\alpha =1$, $p'(x)$ writes
$$C \exp{\left(\frac{2\sqrt{V_0}}{\sigma^2} (b-\rho \sigma) \exp{\sigma x}- \frac{2a}{\sigma} x\right)}$$
so that if $b > \rho \sigma$, $a(\infty)=\infty$. A straightforward computation shows that this also holds when
$b = \rho \sigma$.
\end{itemize}

The other cases require a little more work. So let us assume $\alpha\leq 1$ and $\rho>0$. Then $p'(x)$ rewrites
$$\exp{A\exp{\alpha x} - B \exp{x} -D x}$$ for positive $A,B$ and $D$ has the sign of $a$.\\

As a result, $a(z) = \int_0^z \int_0^x \exp{\left(A(\exp{\alpha x}-\exp{\alpha y}) - B (\exp{x}-\exp{y}) -D (x-y)\right) } dx dy$. 
It follows that $\frac{\partial a}{\partial \alpha}(z) = \int_0^z \int_0^x A (x \exp{\alpha x}-y \exp{\alpha y})\exp{\left(A(\exp{\alpha x}-\exp{\alpha y}) - B (\exp{x}-\exp{y}) -D (x-y)\right)} dx dy >0$, so that
if we show that $a(\infty)<\infty$ for $\alpha=1$, it will also hold for $\alpha<1$. By the last bullet above this can only happen if $b < \rho \sigma$, that is $A<B$.\\
With $C=B-A$ we are led to consider the integral
$$\int_0^z \exp{(-C e^x -D x)} \int_0^x \exp{(C e^y +D y)} dy dx $$ 
setting $u = Ce^x, v = Ce^y$ we get a constant times $\int_C^{Ce^z} u^{-D-1} e^{-u} \int_C^u v^{D-1} e^{v} dv du$. By Fubini's theorem this is equal to $\int_C^{Ce^z} v^{D-1} e^{v} \int_v^{Ce^z} u^{-D-1} e^{-u}   du dv$. 
This in turn is less than $\int_C^{Ce^z} v^{D-1} e^{v} \int_v^{\infty} u^{-D-1} e^{-u}   du dv = \int_C^{Ce^z} v^{D-1} e^{v} \Gamma(-D,v) $ where $\Gamma$ is the upper incomplete Gamma function. Now $\Gamma(-D,v) \sim_{v \to \infty} v^{-D-1} e^{-v}$, so that the integrand behaves like $v^{-2}$ at infinity and the integral is finite. \\

All in all, the sole remaining case is $\alpha < 1, b \geq \rho \sigma >0$. Let us operate the change of variable $u = 2 \frac{\rho \sqrt{V_0}}{{\sigma}} \exp{\sigma x}$, we are led to the integral
$$\int_A^Z \exp{(-u + c u^{\alpha})} u^{-\frac{2a}{\sigma^2}-1} \int_A^u \exp{(v - c v^{\alpha})} v^{\frac{2a}{\sigma^2}-1} dv du  $$ with a positive $c$ (by hypothesis $\alpha <1$). We proceed as above: by Fubini's theorem and letting the inner integral go to infinity, this is less than
$$\int_A^Z \exp{(v - c v^{\alpha})} v^{\frac{2a}{\sigma^2}-1}  \int_v^\infty \exp{(-u + c u^{ \alpha})} u^{-\frac{2a}{\sigma^2}-1}   du dv.$$ \\

We claim that $I = \int_v^\infty \exp{(-u + c u^{ \alpha})} u^{-\frac{2a}{\sigma^2}-1}   du \sim \exp{(-v + c v^{ \alpha})} v^{-\frac{2a}{\sigma^2}-1}$, and we can conclude as in the case of the incomplete Gamma function above that $a(\infty)<\infty$. Indeed, by first scaling through the change of variable $u= v z$, $I=\int_1^\infty \exp{(-v z + c v^{ \alpha} z^{ \alpha})} v^{-\frac{2a}{\sigma^2}} z^{-\frac{2a}{\sigma^2}-1} dz$. 
Hence $$I=v^{-\frac{2a}{\sigma^2}} \exp{(-v + c v^{ \alpha})} \int_1^\infty \exp{(-v (z-1) + c v^{\alpha} (z^{ \alpha}-1))} z^{-\frac{2a}{\sigma^2}-1} dz.$$
Setting $z-1=t$ and $r=vt$ we get
$$I=v^{-\frac{2a}{\sigma^2}-1} \exp{(-v + c v^{\alpha})} \int_0^\infty \exp{\left(-r + c v^{ \alpha} \left(\left(1+\frac{r}{v}\right)^{ \alpha}-1 \right)\right)} \left(1+\frac{r}{v}\right)^{\frac{2a}{\sigma^2}-1} dr.$$
As $v \to \infty$ the integrand goes pointwise to $\exp{(-r)}$. Since $\int_0^\infty \exp{(-r)} dr =1$, the last point to check is that we can apply the dominated convergence theorem. This is indeed the case since, on one hand, one always has $(1+\frac{r}{v})^{\frac{2a}{\sigma^2}-1}<(1+r)^{\max{\left(\frac{2a}{\sigma^2}-1, 0\right)}}$ for $v>1$
and on another hand by concavity $c v^{ \alpha} ((1+\frac{r}{v})^{ \alpha}-1)   <  c  v^{ \alpha-1} \alpha r $ with $\alpha<1$, so that for $v$ large enough $  c  v^{ \alpha-1} \alpha<1-\epsilon$ with $\epsilon>0$ and the integrand is less than $e^{-\epsilon r}(1+r)^{\max{\left(\frac{2a}{\sigma^2}-1, 0\right)}}$.\\

We have therefore proven the following result.

\begin{proposition}
$f$ is a martingale if and only if $\alpha\geq 2$, or $\alpha<2$ and either:
\begin{itemize}
\item $\rho \leq 0$
\item $\alpha > 1$
\item	 $\alpha = 1$ and $b \geq \rho \sigma$ 
\end{itemize}
\end{proposition}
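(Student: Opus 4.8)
The plan is to characterize the martingale property through the standard dichotomy for positive local martingales. Since $f = \mathcal{E}(L)$ with $L_t = \int_0^t e^{v_s} dw_{1,s}$ is a positive local martingale, it is a supermartingale, hence a true (uniformly integrable) martingale precisely when its expectation stays constant, i.e. $\mathbb{E}[f_t] = f_0$ for every $t$. I would split the argument according to the strength of the mean reversion, which is governed by $\alpha$.

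First, for $\alpha \geq 2$, I would invoke Novikov's criterion directly, using the explicit expression $\langle L \rangle_t = I(t)$ derived earlier. In the case $\alpha = 2$ one has $\exp(\frac{1}{2}\langle L\rangle_t) = (1 + 2b \int_0^t e^{2(v_0 + as + \sigma w_{2,s})} ds)^{1/(4b)}$, which I would bound above using the running maximum $w_{2,t}^*$ together with the integrability of $\exp(c\, w_{2,t}^*)$ (noting $w_{2,t}^* \overset{d}{=} |w_{2,t}|$). For $\alpha > 2$, Hölder's inequality gives $\langle L\rangle_t \leq t^{1 - 2/\alpha} (\int_0^t e^{\alpha v_s} ds)^{2/\alpha}$, and since $\int_0^t e^{\alpha v_s} ds = \frac{1}{\alpha b}\ln(1 + \alpha b \int_0^t V_0^{\alpha/2} e^{\alpha(as + \sigma w_{2,s})} ds)$ grows only logarithmically, the same integrability follows. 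This settles that $f$ is always a martingale when $\alpha \geq 2$.

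For $\alpha < 2$, I would follow Jourdain's reduction: by Girsanov's theorem and Yamada--Watanabe pathwise uniqueness one obtains $\mathbb{E}[f_t]/f_0 = \mathbb{E}^P[\mathcal{E}(\int_0^t b(B_s) dB_s)] = P(\tau_\infty > t)$, where $\tau_\infty$ is the explosion time of $dZ_s = b(Z_s) ds + dB_s$ with the drift $b(z)$ computed above. Thus $f$ is a martingale if and only if $P(\tau_\infty = \infty) = 1$, which by Feller's test for explosions holds if and only if $a(-\infty) = a(+\infty) = \infty$ for $a(z) = \int_0^z p'(x) \int_0^x \frac{2}{p'(y)} dy\, dx$ built from the scale density $p'$. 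The behaviour at $-\infty$ always yields $\infty$: when $a \geq 0$ the scale density forces divergence, while when $a < 0$ the integrand tends to the positive constant $1$, so the integral diverges in every case. The whole question therefore reduces to the behaviour at $+\infty$.

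The decisive and most delicate part is the analysis at $+\infty$, and this is where I expect the main obstacle to lie. Here $p'(x) \sim C\exp(A e^{\alpha x} - B e^x - Dx)$ with $A, B > 0$. When $\rho \leq 0$ the negative exponential dominates and $a(\infty) = \infty$; when $\rho > 0$ and $\alpha > 1$ the positive exponential $A e^{\alpha x}$ dominates and again $a(\infty) = \infty$; and for $\alpha = 1$ a direct computation shows $a(\infty) = \infty$ exactly when $b \geq \rho\sigma$. The genuinely hard regime is $\alpha \leq 1$ with $\rho > 0$: I would first establish $\frac{\partial a}{\partial \alpha}(z) > 0$, so that convergence of $a(\infty)$ at $\alpha = 1$ (which occurs precisely when $b < \rho\sigma$) propagates to all $\alpha < 1$, and then treat the sole remaining subcase $\alpha < 1,\ b \geq \rho\sigma > 0$ by a change of variables and a Fubini step reducing $a(\infty)$ to an incomplete-Gamma-type integral $\int_v^\infty \exp(-u + c u^\alpha) u^{-2a/\sigma^2 - 1}\, du$. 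The crux is the asymptotic estimate of this integral: I would scale $u = vz$, substitute $r = v(z-1)$, and apply dominated convergence, controlling the integrand by $e^{-\epsilon r}(1+r)^{\max(2a/\sigma^2 - 1,\, 0)}$ via the concavity bound $c v^\alpha((1+r/v)^\alpha - 1) < c\alpha v^{\alpha-1} r$. Assembling the Novikov result for $\alpha \geq 2$ with this case analysis produces exactly the stated list of conditions.
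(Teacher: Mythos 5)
Your proposal is correct and follows essentially the same route as the paper: Novikov's criterion with the running-maximum and H\"older bounds for $\alpha\geq 2$, then Jourdain's Girsanov--Yamada--Watanabe reduction to non-explosion and Feller's test for $\alpha<2$, with the same case analysis at $+\infty$, the same monotonicity-in-$\alpha$ argument, and the same incomplete-Gamma asymptotics with dominated convergence for the residual case $\alpha<1$, $b\geq\rho\sigma>0$. No substantive differences to report.
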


\subsubsection{Inversion}

Since $(f_t)_{t \geq 0}$ is a true martingale, we can look at the dynamic of $\frac{1}{f}$ under the change of measure induced by the martingale $\frac{f_T}{f_0}$. By Ito's formula, 

$$d\frac{1}{f_t} = -\frac{1}{f^2_t} df_t +\frac{1}{f_t^3} d<f>_t.$$
Now $df_t = \sqrt{V_t} f_t dw_{1,t}$ and $d<f>_t = V_t f_t^2 dt$ so that with $g_t = \frac{1}{f_t}$,
$$dg_t = -\sqrt{V_t} g_t dw_{1,t}+V_t g_t dt.$$
Under the probability $Q = \frac{f_T}{f_0} P$, $\tilde{w}_{1,t} = w_{1,t}-\int_0^t \sqrt{V_s} ds$ is a martingale, and even a Brownian motion by L\'{e}vy's characterization theorem. So
$$dg_t = \sqrt{V_t} g_t (-dw_{1,t}+\sqrt{V_t} dt) = -\sqrt{V_t} g_t d\tilde{w}_{1,t}.$$\\ 

What happens to the variance SDE? Under $Q$, $\tilde{w}_{2,t} = w_{2,t}-\rho \int_0^t \sqrt{V_s} ds$ is a Brownian motion, so that
$$dv_t=(a - b e^{\alpha v_t}) dt + \sigma (dw_{2,t}- \rho \sqrt{V_t} dt) +  \sigma \rho \sqrt{V_t} dt = (a - b e^{\alpha v_t}) dt + \sigma d\tilde w_{2,t} +  \sigma \rho  e^{v_t}dt$$ so it will belong to the same family if and only if $\rho=0$, in which case the inverted model is the initial one, or $\alpha=1$, in which case the mean reversion parameter of the inverted model is given by $b-\rho \sigma$. In particular, if $b=\rho \sigma$, $v_t = a + \sigma \tilde w_{2,t}$ under $Q$.

\section{The Hypergeometric Model for $\alpha=1$ and its Morse Potential Representation }
In other to price both volatility derivatives \textit{and} equity derivatives we need to further specify the dynamic of the volatility by taking $\alpha=1$ so that we are able to compute the Mellin transform of the forward price which is the essential ingredient to price vanilla options. In that case, the dynamic for the stock and volatility is given by
 
\begin{eqnarray}
df_t&=&f_t e^{ v_t}dw_{1,t},\\
dv_t&=&(a - b e^{v_t}) dt + \sigma dw_{2,t}
\end{eqnarray}

where  as in the general case $dw_{1,t}.dw_{2,t}=\rho dt$.\\ 

We will re-derive some of the results obtained so far. Instead of relating the volatility to the process $Y_t^{\nu}(x)$ we will compute directly the resolvent for the $\alpha$-Hypergeometic model. We owe to the works \cite{pin2010} and \cite{Pintoux2011}, both dealing with interest rate models, the computation strategy used to obtain the key quantities. 
  
\subsection{Volatility Analysis}

We want to compute $\mathbb{E}[e^{\theta v_t}]$. Define a probability $Q$ under which $\tilde{w}_{2,t}= w_{2,t} +\int_0^t\frac{a-be^{v_s}}{\sigma}ds$ is a Brownian motion, we deduce after replacing $\sigma\int_0^te^{v_u}d\tilde{w}_{2,u}=e^{v_t}-e^{v_0}-\frac{\sigma^2}{2}\int_0^te^{v_u}du$ that
\begin{equation*}
\mathbb{E}[e^{\theta v_t}]=e^{-\frac{a}{\sigma^2}v_0 + \frac{b}{\sigma^2}e^{v_0}} e^{-\frac{a^2t}{2\sigma^2}}\mathbb{E}^{Q}\left[   \exp\left(\left(\theta+\frac{a}{\sigma^2}\right)v_t -\frac{b}{\sigma^2}e^{v_t}\right) \exp\left(\beta_1\int_0^t e^{v_u}du -\frac{\beta_2^2}{2}\int_0^te^{2v_u}du\right)  \right]
\end{equation*}

with $\beta_1=\frac{ab}{\sigma^2}+\frac{b}{2}$, $\beta_2^2=\frac{b^2}{\sigma^2}$ and $dv_t=\sigma d\tilde{w}_{2,t}$. Denote by $F(t,v)$ the expectation then it solves, thanks to Feynman-Kac's theorem, the partial differential equation

\begin{align*}
\partial_tF&=\frac{\sigma^{2}}{2}\frac{d^2F}{dv^2}  - \frac{\beta_2^2}{2}e^{2v} F + \beta_1e^{v}F,\\
F(0,v)&=  e^{\left(\theta+\frac{a}{\sigma^2}\right)v -\frac{b}{\sigma^2}e^{v}}.
\end{align*}

Denote by $g(\sigma^2t,v)=F(t,v)$ then is solves the partial differential equation

\begin{align*}
\partial_tg&=- H g,\\
g(0,v)&=e^{\left(\theta+\frac{a}{\sigma^2}\right)v -\frac{b}{\sigma^2}e^{v}}
\end{align*}
with $H = -\frac{1}{2}\frac{d^2}{dv^2}  + \frac{\nu_2^2}{2}e^{2v}  - \nu_1e^{v}$ with $\nu_1=\frac{\beta_1}{\sigma^2}$ and $\nu_2^2=\frac{\beta_2^2}{\sigma^2}$. The operator $H$ involves a Morse potential, see \cite{grosche1988}, page 228 in \cite{gro1998}, \cite{ikeda1999} and the surveys \cite{my1} and \cite{my2}.\\

We denote by $q(t,v,y)$ the heat kernel associated with $e^{-tH}$ then we have
\begin{align*}
F(t,v_0)=\int_{-\infty}^{+\infty}q(\sigma^2 t,v_0,y)F(0,y)dy.
\end{align*}
The Green function associated with the Laplace transform of the heat kernel is given by
\begin{equation}
G(v,y;s^2/2)=\int_0^{+\infty}e^{-\frac{s^2}{2}t}q(t,v,y)dt. \label{GtoQ}
\end{equation}

Taking the Laplace transform of $\mathbb{E}[e^{\theta v_t}]$ we deduce
\begin{align}
\int_0^{+\infty} e^{-\frac{s^2}{2}t}\mathbb{E}\left[e^{\theta v_t} \right]dt&=e^{-\frac{a}{\sigma^2}v_0 + \frac{b}{\sigma^2}e^{v_0}} \int_0^{+\infty}e^{-\left(\frac{a^2}{\sigma^2}+s^2\right)t/2}\int_{-\infty}^{+\infty}q(\sigma^2t,v_0,y)F(0,y)dydt \label{ltFgm}\\
&=\frac{1}{\sigma^2}e^{-\frac{a}{\sigma^2}v_0 + \frac{b}{\sigma^2}e^{v_0}}\int_{-\infty}^{+\infty}  \int_0^{+\infty}e^{-\frac{\eta^2}{2}t}q(t,v_0,y)dt F(0,y)dy\nonumber \\
&=\frac{1}{\sigma^2}e^{-\frac{a}{\sigma^2}v_0 + \frac{b}{\sigma^2}e^{v_0}}\int_{-\infty}^{+\infty} G(v_0,y;\eta^2/2) F(0,y)dy \nonumber
\end{align}
with $\eta^2=\frac{a^2}{\sigma^4}+\frac{s^2}{\sigma^2}$. We know from \cite{my1} pages 341-342 or \cite{my2} page 360 that
\begin{align*}
G(v,y;\eta^2/2)= \frac{\Gamma\left(\eta-\frac{\nu_1}{\nu_2}+\frac{1}{2}\right)}{\nu_2\Gamma(1+2\eta)} e^{-(v+y)/2}W_{\frac{\nu_1}{\nu_2},\eta}\left( 2\nu_2e^{y_{>}}\right)M_{\frac{\nu_1}{\nu_2},\eta}\left( 2\nu_2e^{y_{<}}\right)
\end{align*} 
with $y_{>}=\max(v,y)$ and $y_{<}=\min(v,y)$  while $W_{\kappa,\eta}$ and $M_{\kappa,\eta}$ are the Whittaker functions related to the confluent hypergeometric functions by the relations
\begin{align*}
W_{\kappa,\eta}(z)&=z^{\eta+\frac{1}{2}}e^{-z/2}\Psi\left(\eta-\kappa +\frac{1}{2},1+2\eta;z \right),\\
M_{\kappa,\eta}(z)&=z^{\eta+\frac{1}{2}}e^{-z/2}\Phi\left(\eta-\kappa +\frac{1}{2},1+2\eta;z \right).
\end{align*}
It is known that the heat kernel is given by 
\begin{align}
q(t,v,y)&=\int_0^{+\infty}\frac{e^{2\frac{\nu_1}{\nu_2}u}}{2\sinh(u)}e^{-\frac{\nu_1}{\nu_2}(e^v+e^y)\coth(u)}\theta\left(2\frac{\nu_1}{\nu_2}e^{(v+y)/2}/\sinh(u),t\right)du, \label{heatKernelMorsePotential}\\
\theta(r,t)&=\frac{r}{(2\pi^3t)^{\frac{1}{2}}} e^{\pi^2/(2t)}\int_0^{+\infty}e^{-u^2/(2t)}e^{-r\cosh(u)}\sinh(u)\sin\left(\frac{u\pi}{t}\right)du. \label{thetaFunc}
\end{align}
We wish to compute 
\begin{align}
\int_{-\infty}^{+\infty} G(v_0,y;\eta^2/2) F(0,y)dy&=\int_{-\infty}^{v_0}G(v_0,y;\eta^2/2) F(0,y)dy + \int_{v_0}^{+\infty}G(v_0,y;\eta^2/2) F(0,y)dy \nonumber  \\
&=J_1+J_2 .\label{J1J2}
\end{align}

We have
\begin{align*}
J_1&=\frac{\Gamma\left(\eta-\frac{\nu_1}{\nu_2}+\frac{1}{2}\right)}{\nu_2\Gamma(1+2\eta)}e^{-v_0/2} W_{\frac{\nu_1}{\nu_2},\eta}\left( 2\nu_2e^{v_0}\right) \int_{-\infty}^{v_0} e^{-y/2}M_{\frac{\nu_1}{\nu_2},\eta}\left( 2\nu_2e^{y}\right)F(0,y)dy\\
&=\frac{\Gamma\left(\eta-\frac{\nu_1}{\nu_2}+\frac{1}{2}\right)}{\nu_2\Gamma(1+2\eta)}e^{-v_0/2} W_{\frac{\nu_1}{\nu_2},\eta}\left( 2\nu_2e^{v_0}\right)(2\nu_2)^{\frac{1}{2}-n-\frac{a}{\sigma^2}}\int_{0}^{z_0}z^{\eta-1+\theta+\frac{a}{\sigma^2}}e^{-z}\Phi\left(\eta-\frac{a}{\sigma^2},1+2\eta;z\right)dz
\end{align*}
where $z_0=2\nu_2e^{v_0}$, $\frac{\nu_1}{\nu_2}=\frac{a}{\sigma^2}+\frac{1}{2}$ and we used the representation for the Whittaker function $M_{\kappa,\eta}(z)$. Similarly, the representation for the Whittaker function $W_{\kappa,\eta}(z)$ leads to
\begin{align*}
J_2&= \frac{\Gamma\left(\eta-\frac{\nu_1}{\nu_2}+\frac{1}{2}\right)}{\nu_2\Gamma(1+2\eta)}e^{-v_0/2} M_{\frac{\nu_1}{\nu_2},\eta}\left( 2\nu_2e^{v_0}\right) \int_{v_0}^{+\infty} e^{-y/2}W_{\frac{\nu_1}{\nu_2},\eta}\left( 2\nu_2e^{y}\right)F(0,y)dy\\
&= \frac{\Gamma\left(\eta-\frac{\nu_1}{\nu_2}+\frac{1}{2}\right)}{\nu_2\Gamma(1+2\eta)}e^{-v_0/2} M_{\frac{\nu_1}{\nu_2},\eta}\left( 2\nu_2e^{v_0}\right) (2\nu_2)^{\frac{1}{2}-n-\frac{a}{\sigma^2}}\int_{z_0}^{+\infty}z^{\eta-1+ \theta +\frac{a}{\sigma^2}}e^{-z}\Psi\left(\eta-\frac{a}{\sigma^2},1+2\eta;z\right)dz.
\end{align*}

To connect these results to the previous ones we just need 
\begin{remark}\label{integToI}
If we denote $a_1-1=\eta -\frac{a}{\sigma^2}$ and $b_1=1+2\eta$ then the two integrals above (i.e. involved in $J_1$ and $J_2$) can be rewritten as
\begin{align*}
\int_{z_0}^{+\infty}z^{b_1-a_1 + \theta -1}e^{-z}\Psi\left(a_1-1,b_1;z\right)dz,\\
\int_{0}^{z_0}z^{b_1-a_1 + \theta -1}e^{-z}\Phi\left(a_1-1,b_1;z\right)dz,
\end{align*}
which can be computed thanks to the expressions obtained for $I_1$ and $I_2$.
\end{remark}

\subsubsection{The variance swaps revisited}

The variance swap is given by

\begin{equation*}
t\textsc{vs}(t)=\int_0^{t}\mathbb{E}\left[e^{2v_u} \right]du
\end{equation*}
and its Laplace transform is
\begin{equation}
\int_{0}^{+\infty}e^{-s^2t/2} t\textsc{vs}(t) dt = \frac{2}{s^2}\int_{0}^{+\infty}e^{-s^2t/2} \mathbb{E}\left[e^{2v_t} \right]dt. \label{ltVs} 
\end{equation}

The equation \eqref{ltVs} is the left hand side of equation \eqref{ltFgm} and leads to the integrals $J_1$ and $J_2$ given above and thanks to Remark \ref{integToI} the series representations for $I_1$ and $I_2$ enable an efficient computation of the variance swap.

\begin{remark}
To check that  
\begin{align*}
\int_0^{+\infty} e^{-\frac{s^2}{2}t}\mathbb{E}\left[e^{\theta v_t} \right]dt < +\infty
\end{align*}
we need to verify that 
\begin{align*}
\int_{v_0}^{+\infty}e^{-y/2} W_{\frac{\nu_1}{\nu_2},\eta}\left( 2\nu_2e^{y}\right)\exp\left\lbrace\left(\theta +\frac{a}{\sigma^2}\right)y - \frac{b}{\sigma^2}e^y \right\rbrace dy,\\
\int_{-\infty}^{v_0}e^{-y/2} M_{\frac{\nu_1}{\nu_2},\eta}\left( 2\nu_2e^{y}\right)\exp\left\lbrace\left(\theta +\frac{a}{\sigma^2}\right)y - \frac{b}{\sigma^2}e^y \right\rbrace dy
\end{align*}
are finite. As the Whittaker $W_{\kappa,\eta}$ function is related to the confluent  hypergeometric function $\Psi$ and using relation 6.2.2 of \cite{bea2010}, which is
\begin{align*}
\Psi(\alpha,\beta;z)\sim z^{-\alpha} \quad \textrm{if} \quad \Re(z)\rightarrow +\infty  \textrm{ and } \Re(\alpha) >0,
\end{align*}
we conclude that because $\Re(\nu_2)>0$ and $\Re\left(\eta-\frac{\nu_1}{\nu_2} +\frac{1}{2}\right)>0 $ for $s$ large enough ($\eta$ depends on $s$) so the integrand of the first integral behaves like
\begin{align*}
e^{y\left(\frac{2a}{\sigma^2} + \theta \right)}\exp\left\lbrace \frac{-2b}{\sigma^2}e^{y} \right\rbrace  \quad \textrm{as} \quad y\rightarrow +\infty
\end{align*}
and therefore the integral will be finite for all values of $\theta $ ($b$ is positive). For the second  integral  replacing the Whittaker function $M_{\kappa,\eta}$ by its expression and using the property 13.2.13 of \cite{DLMF}, that is $\Phi(\alpha,\beta;z)\sim 1 \; \textrm{for} \; z\sim 0 $, we deduce that the integrand behaves like 
\begin{equation*}
e^{y\left(\eta + \theta + \frac{a}{\sigma^2} \right)} \quad \textrm{as} \quad y\rightarrow -\infty,
\end{equation*}
for all values of $\theta$ there exists a value for $s$ such that $\eta + \theta + \frac{a}{\sigma^2}>0$ so the integral is finite. Notice also that to the extent that $\Re(\nu_2)>0$ we can have a potential with complex coefficients and the integrals will remain finite.
\end{remark}

\subsection{The Mellin Transform of the Spot}

In order to perform the pricing of vanilla options we need to compute the Mellin transform of the spot. We have 
\begin{eqnarray*}
\mathbb{E}\left[\left(\frac{f_t}{f_0}\right)^\lambda \right]  &=&\mathbb{E}\left[ \exp\left(-\frac{\lambda}{2}\int_0^te^{2v_u}du + \lambda \int_0^te^{2v_u}dw_{1,u}\right)\right] \\
&=&\mathbb{E}\left[ \exp\left(-\frac{\lambda}{2}\int_0^te^{2v_u}du +  \lambda\rho \int_0^te^{2v_u}dw_{2,u} + \lambda\sqrt{1-\rho^2} \int_0^te^{2v_u}dw_{2,u}^{\bot}\right)\right] \\
&=&\mathbb{E}\left[ \exp\left(\left( -\frac{\lambda}{2} + \frac{\lambda^2(1-\rho^2)}{2}\right)\int_0^te^{2v_u}du  +    \lambda\rho \int_0^te^{2v_u}dw_{2,u} \right)\right]
\end{eqnarray*}

where we used the standard Brownian motion $(w_{2,t},w_{2,t}^\bot)_{t \geq 0}$. Furthermore, the relation 

\begin{equation}
\sigma \int_0^te^{v_u}dw_{2,u} = e^{v_t}-e^{v_0}-\int_0^te^{v_u}(a-be^{v_u})du -\frac{\sigma^2}{2}\int_0^t e^{v_u}du \label{intevdw}
\end{equation} 
leads to
\begin{align*}
\mathbb{E}\left[\left(\frac{f_t}{f_0}\right)^\lambda \right]= e^{-\frac{\lambda \rho}{\sigma}e^{v_0}} \mathbb{E}\left[ \exp\left( \alpha_0 e^{v_t} + \alpha_1 \int_0^te^{v_s}ds  - \frac{\alpha_2^2}{2} \int_0^te^{2v_s}ds \right)  \right]
\end{align*}
with 
\begin{align*}
\alpha_0&=\frac{\lambda\rho}{\sigma},\\
\alpha_1&=-\frac{\lambda\rho}{\sigma}\left(a+ \frac{\sigma^2}{2}\right),\\
\alpha_2^2&=-\lambda^2(1-\rho^2) -\frac{2b\rho \lambda}{\sigma } + \lambda.
\end{align*}

Using Girsanov's theorem we deduce that
\begin{align*}
J&=\mathbb{E}\left[ \exp\left( \alpha_0 e^{v_t} + \alpha_1 \int_0^te^{v_s}ds  - \frac{\alpha_2^2}{2} \int_0^te^{2v_s}ds \right)  \right]\\
&=\mathbb{E}^{Q}\left[ \exp\left( \alpha_0 e^{v_t} + \alpha_1 \int_0^te^{v_s}ds  - \frac{\alpha_2^2}{2} \int_0^te^{2v_s}ds \right) \exp\left( \int_0^t\frac{a-be^{v_u}}{\sigma}d\tilde w_s  - \frac{1}{2} \int_0^t\frac{(a-be^{v_u})^2}{\sigma^2}ds \right)  \right]
\end{align*}

with $dv_t=\sigma d\tilde w_t$ and $ \tilde  w_t = w_{2,t} +\int_0^t\frac{a-be^{v_u}}{\sigma}du $ a Brownian motion under $Q$. Using again the equality \eqref{intevdw} (with  convenient parameters) we deduce that
\begin{align}
\mathbb{E}\left[\left(\frac{f_t}{f_0}\right)^\lambda \right]  &=e^{-\frac{a}{\sigma^2}v_0 + (\frac{b}{\sigma^2}-\frac{\lambda \rho}{\sigma}) e^{v_0}} e^{-\frac{a^2t}{2\sigma^2}}\mathbb{E}^Q\left[ \exp\left(\frac{av_t}{\sigma^2}+\beta_0 e^{v_t}+ \beta_1\int_0^t e^{v_s}ds -\frac{\beta_2^2}{2}\int_0^te^{2v_s}ds\right)  \right] \label{Mellin1}
\end{align}
with 
\begin{align*}
\beta_0 &= \alpha_0 - \frac{b}{\sigma^2} = \frac{\lambda \rho \sigma-b}{\sigma^2},\\
\beta_1 &=  \alpha_1 + b\left( \frac{a}{\sigma^2}+ \frac{b}{2}\right) =  (b-\lambda\rho \sigma) \left(\frac{a}{\sigma^2} + \frac{1}{2}\right),\\
\beta_2^2 &= \alpha_2^2 + \frac{b^2}{\sigma^2} =  -\lambda^2(1-\rho^2) +  \lambda \left(1 -  \frac{2b\rho}{\sigma}\right)+\frac{b^2}{\sigma^2}.
\end{align*}
As above, introduce
$$\nu_1=\frac{\beta_1}{\sigma^2}, \quad \nu_2^2=\frac{\beta_2^2}{\sigma^2} $$
and $F(0,v)=\exp\left(\frac{av}{\sigma^2}+\beta_0 e^{v}\right)$. We denote by $F(t,v)$ the expectation in \eqref{Mellin1}, then thanks to Feynman-Kac's formula it solves the partial differential equation
\begin{align*}
\partial_tF&=\frac{\sigma^{2}}{2}\frac{d^2F}{dv^2}  - \frac{\beta_2^2}{2}e^{2v} F + \beta_1e^{v}F,\\
F(0,v)&=  e^{\frac{av}{\sigma^2} + \beta_0e^{v}}.
\end{align*}
Proceeding as above we obtain the following integral representation
\begin{equation*}
F(t,v_0)=\int_{-\infty}^{+\infty}q(\sigma^2 t,v_0,y)F(0,y)dy,
\end{equation*}

which requires the kernel $q$, known from \eqref{heatKernelMorsePotential}, but is hard to exploit. We can also use the Green function given by \eqref{GtoQ} as follows, we compute the Laplace transform
\begin{align}
\int_0^{+\infty} e^{-\frac{s^2}{2}t}  e^{-\frac{a^2t}{2\sigma^2}} F(t,v_0)dt &=\frac{1}{\sigma^2}\int_{-\infty}^{+\infty} G(v_0,y;\eta^2/2) F(0,y)dy \label{LtF}
\end{align}
with $\eta^2=\frac{a^2}{\sigma^4}+\frac{s^2}{\sigma^2}$ and proceed as in the previous example and write the integral appearing in the r.h.s of \eqref{LtF} as in \eqref{J1J2} (as a sum of two integrals denoted $J_1$ and $J_2$ given below). Taking into account the particular form of $F(0,v)$ we are led to the computation of
\begin{align}
J_1&=\frac{\Gamma\left(\eta-\frac{\nu_1}{\nu_2}+\frac{1}{2}\right)}{\nu_2\Gamma(1+2\eta)}e^{-v_0/2} W_{\frac{\nu_1}{\nu_2},\eta}\left( 2\nu_2e^{v_0}\right) \int_{-\infty}^{v_0} e^{-y/2}M_{\frac{\nu_1}{\nu_2},\eta}\left( 2\nu_2e^{y}\right)F(0,y)dy \label{J1Spot}, \\
J_2&= \frac{\Gamma\left(\eta-\frac{\nu_1}{\nu_2}+\frac{1}{2}\right)}{\nu_2\Gamma(1+2\eta)}e^{-v_0/2} M_{\frac{\nu_1}{\nu_2},\eta}\left( 2\nu_2e^{v_0}\right) \int_{v_0}^{+\infty} e^{-y/2}W_{\frac{\nu_1}{\nu_2},\eta}\left( 2\nu_2e^{y}\right)F(0,y)dy \label{J2Spot}
\end{align}
where $z_0=2\nu_2e^{v_0}$. Using the representation for the Whittaker functions $W_{\kappa,\eta}(z)$ and $M_{\kappa,\eta}(z)$ the two integrals above can be transformed into 

\begin{equation*}
\int_{-\infty}^{v_0} e^{-y/2}M_{\frac{\nu_1}{\nu_2},\eta}\left( 2\nu_2e^{y}\right)F(0,y)dy = (2 \nu_2)^{\frac{1}{2}-\frac{a}{\sigma^2}}I_1
\end{equation*}
and

\begin{equation*}
\int_{v_0}^\infty e^{-y/2}W_{\frac{\nu_1}{\nu_2},\eta}\left( 2\nu_2e^{y}\right)F(0,y)dy = (2 \nu_2)^{\frac{1}{2}-\frac{a}{\sigma^2}}I_2
\end{equation*}

with $z_0=2 \nu_2 e^{v_0}$,
\begin{equation}
I_1=\int_{0}^{z_0}z^{\eta-1+\frac{a}{\sigma^2}}e^{(-\frac{1}{2}+\frac{\beta_0}{2\nu_2})z}\Phi\left(\eta-\frac{\nu_1}{\nu_2}+\frac{1}{2},1+2\eta;z\right)dz \label{I1Spot}
\end{equation}
  and 
\begin{equation}
I_2=\int_{z_0}^\infty z^{\eta-1+\frac{a}{\sigma^2}}e^{(-\frac{1}{2}+\frac{\beta_0}{2\nu_2})z}\Psi\left(\eta-\frac{\nu_1}{\nu_2}+\frac{1}{2},1+2\eta;z\right)dz. \label{I2Spot}
\end{equation}

The behaviour of these integrals will be driven by the quantity $-\frac{1}{2}+ \frac{\beta_0}{2\nu_2}$, so let us investigate it.

\begin{lemma} \label{delta}
Let $\delta(\lambda) = -\frac{1}{2}+ \frac{\beta_0(\lambda)}{2\nu_2(\lambda)}$. Denote by $\lambda_{-}, \lambda_{+}$ the roots of the polynomial $(\lambda \rho \sigma -b)^2 +\sigma^2 \lambda (1-\lambda)$. Then:
\begin{itemize}
\item $\delta(\lambda)$ is defined for $\lambda \in ]\lambda_{-}, \lambda_+[$, with $\lambda_{-}< 0<1<\lambda_+$
\item $\delta(0) = -1$
\item $\delta(1)=0$ if  $b<\rho \sigma$, $\delta(1)=-1$ if $b>\rho \sigma$
\item $\delta(\lambda)<0$ for $\lambda \in ]0,1[$
\item When $\rho<0$, $\delta(\lambda)<0$ for $\lambda \in ]\lambda_{-}, \lambda_+[$
\end{itemize}
\end{lemma}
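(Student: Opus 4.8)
The plan is to collapse the definition of $\delta$ into a single sign comparison by expressing everything through the polynomial appearing in the statement, namely $P(\lambda):=(\lambda\rho\sigma-b)^2+\sigma^2\lambda(1-\lambda)$. A direct expansion of $\beta_2^2$ shows that $\sigma^2\beta_2^2=P(\lambda)$, hence $\nu_2=\sqrt{P(\lambda)}/\sigma^2$, and since $\beta_0=(\lambda\rho\sigma-b)/\sigma^2$ the $\sigma^2$ cancel and we obtain the compact form
$$\delta(\lambda)=-\frac{1}{2}+\frac{\lambda\rho\sigma-b}{2\sqrt{P(\lambda)}}=\frac{(\lambda\rho\sigma-b)-\sqrt{P(\lambda)}}{2\sqrt{P(\lambda)}}.$$
Because the denominator is positive wherever it is defined, the sign of $\delta$ is governed entirely by whether $\lambda\rho\sigma-b$ is smaller or larger than $\sqrt{P(\lambda)}$; this is the observation that drives the whole proof.

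Next I would pin down the domain (first bullet). The coefficient of $\lambda^2$ in $P$ is $\sigma^2(\rho^2-1)<0$, so $P$ is a downward parabola; evaluating $P(0)=b^2>0$ and $P(1)=(\rho\sigma-b)^2\ge 0$ places $0$ and $1$ in the closed region $\{P\ge 0\}$, which forces the real roots to obey $\lambda_-<0$ and $\lambda_+\ge 1$, with $\lambda_+>1$ exactly when $b\neq\rho\sigma$. On the open interval $]\lambda_-,\lambda_+[$ one has $P>0$, so $\nu_2>0$ and $\delta$ is well defined there. The degenerate case $b=\rho\sigma$ sends $\lambda_+\to 1$ and $\nu_2(1)\to 0$, which is precisely why the third bullet is stated only for $b<\rho\sigma$ and $b>\rho\sigma$.

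The two pointwise values then follow by substitution. At $\lambda=0$, $P(0)=b^2$ and $\lambda\rho\sigma-b=-b$, so $\delta(0)=(-b-\sqrt{b^2})/(2\sqrt{b^2})=-1$ since $b>0$; this is the second bullet. At $\lambda=1$, $P(1)=(\rho\sigma-b)^2$ and the numerator is $(\rho\sigma-b)-|\rho\sigma-b|$, which vanishes when $b<\rho\sigma$ (giving $\delta(1)=0$) and equals $2(\rho\sigma-b)=-2|\rho\sigma-b|$ when $b>\rho\sigma$ (giving $\delta(1)=-1$); this is the third bullet. For the fourth bullet, on $]0,1[$ the term $\sigma^2\lambda(1-\lambda)$ is strictly positive, so $P(\lambda)>(\lambda\rho\sigma-b)^2$ and therefore $\sqrt{P(\lambda)}>|\lambda\rho\sigma-b|\ge\lambda\rho\sigma-b$; the sign criterion gives $\delta(\lambda)<0$.

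I expect the fifth bullet to be the main obstacle, because on the negative part of $]\lambda_-,\lambda_+[$ the term $\sigma^2\lambda(1-\lambda)$ is now \emph{negative}, so the squaring shortcut used for the fourth bullet runs the wrong way and one must instead control the sign of $\lambda\rho\sigma-b$ itself. The key point is that when $\rho<0$ the affine map $\lambda\mapsto\lambda\rho\sigma-b$ is decreasing and vanishes at $\lambda_0=b/(\rho\sigma)<0$; evaluating $P(\lambda_0)=\sigma^2\lambda_0(1-\lambda_0)<0$ (since $\lambda_0<0$ and $1-\lambda_0>0$) shows that $\lambda_0$ lies strictly to the left of $\lambda_-$. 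Hence every $\lambda\in]\lambda_-,\lambda_+[$ satisfies $\lambda>\lambda_0$, so $\lambda\rho\sigma-b<0<\sqrt{P(\lambda)}$ and $\delta(\lambda)<0$ throughout, which closes the argument.
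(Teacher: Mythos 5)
Your proof is correct and follows essentially the same route as the paper: both reduce $\delta$ to $\frac{1}{2}\bigl(\frac{\lambda\rho\sigma-b}{\sqrt{(\lambda\rho\sigma-b)^2+\sigma^2\lambda(1-\lambda)}}-1\bigr)$ and do a sign analysis, including the same key observation for the $\rho<0$ case that $b/(\rho\sigma)$ annihilates the squared monomial and therefore lies strictly left of $\lambda_-$. Your treatment of the last bullet is if anything slightly more direct than the paper's (which detours through the location of the maximum of $\beta_0/(2\nu_2)$ on $[0,1]$), but the substance is identical.
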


\begin{proof}
Observe that
$$\sigma^2 \beta_2^2 = (\lambda \rho \sigma -b)^2 + \sigma^2 \lambda (1-\lambda) $$
so that with $\nu_2 = \frac{\beta_2}{\sigma} $
$$\frac{\beta_0}{2 \nu_2}= \frac{1}{2} \frac{(\lambda \rho \sigma -b)}{\sqrt{(\lambda \rho \sigma -b)^2 +\sigma^2 \lambda (1-\lambda)}}.$$ 
In particular, $\frac{\beta_0}{2 \nu_2}(\lambda=0)=-\frac{1}{2}$ and $\frac{\beta_0}{2 \nu_2}(\lambda=1)=sgn(\rho \sigma-b)\frac{1}{2}$.
Note that $-\frac{1}{2}+ \frac{\beta_0}{2\nu_2} = \frac{1}{2} (\frac{(\lambda \rho \sigma -b)}{\sqrt{(\lambda \rho \sigma -b)^2 +\sigma^2 \lambda (1-\lambda)}}-1)<0$
as soon as $\lambda \in [0,1[$ or $\lambda=1$ and $b>\rho \sigma$.
If $\rho<0$ then $\frac{\beta_0}{2 \nu_2}(\lambda=1)=-\frac{1}{2}$  and the maximum $m$ of $\lambda \to \frac{\beta_0}{2 \nu_2}(\lambda)$ is attained between $0$ and $1$ with $-\frac{1}{2}<m<0$. Moreover, $\beta_2$ is well defined as long as $\lambda_-\leq\lambda\leq\lambda_+$ with $\lambda_-<0<1<\lambda_+$. The last constraint to check is $\beta_0<0$.	 Assuming $\rho<0$, this amounts to $\lambda>\frac{b}{\rho \sigma}$ which is negative and even smaller than $\lambda_-$ since it cancels the first squared monomial in the expression of $\beta_2^2$.
It follows that all the range $\lambda_-, \lambda_+$ is allowed, with the exponent $\frac{\beta_0}{2 \nu_2}$ living between $-\frac{1}{2}$ and its maximum $m$ for $\lambda \in [0,1]$ and decreasing to $-\infty$ close to $\lambda_-$ or $\lambda_+$.

\end{proof}

\paragraph{Computation of $I_1$:}
Because $\Phi\left(\eta-\frac{\nu_1}{\nu_2}+\frac{1}{2},1+2\eta;0\right)=1$, $I_1$ is well defined if and only if
$\eta+\frac{a}{\sigma^2}>0$, which is always true since $\eta=\sqrt{\frac{a^2}{\sigma^4}+\frac{s^2}{\sigma^2}}$.
By Fubini's theorem, $I_1=\sum_{n=0}^\infty \frac{(\eta-\frac{\nu_1}{\nu_2}+\frac{1}{2})_n}{(1+2\eta)_n n!} \int_0^{z_0} z^{\eta-1+\frac{a}{\sigma^2}+n}e^{\delta(\lambda) z} dz $.
Let $i_n = \int_0^{z_0} z^{\eta-1+\frac{a}{\sigma^2}+n}e^{\delta(\lambda)z} dz$, the following results focus on the determination of this key quantity.
\subparagraph{When $\delta(\lambda)<0 $:}
 Then $i_n=(-\delta(\lambda))^{-\eta-\frac{a}{\sigma^2}-n} \gamma\left(\eta+\frac{a}{\sigma^2}+n, -\delta(\lambda) z_0\right) $ with $\gamma$ the lower incomplete Gamma function. By integration by parts we have 
$$\delta(\lambda) i_{n+1}= z_0^{\eta+\frac{a}{\sigma^2}+n} e^{\delta(\lambda)z_0}-\left(\eta+\frac{a}{\sigma^2}+n\right) i_n.$$ 
So there is a straightforward recurrence to compute the term of the series of $I_1$.

\subparagraph{When $\delta(\lambda)=0 $:}
Then $i_n = \frac{z_0^{\eta+\frac{a}{\sigma^2}+n}}{\eta+\frac{a}{\sigma^2}+n}$.

\paragraph{Computation of $I_2$:}

Let us investigate first the key coefficients in $I_2$.

Note that
$$\frac{\nu_1}{\nu_2} = -\left(\frac{a}{\sigma^2} + \frac{1}{2}\right)  2 \frac{\beta_0}{2 \nu_2},$$

and we have
\begin{lemma}
$I_2(\lambda=1)$ is finite. 
\end{lemma}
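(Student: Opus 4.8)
The plan is to reduce the finiteness of $I_2(\lambda=1)$ to a single growth estimate on the integrand as $z\to\infty$. The integral in \eqref{I2Spot} runs over $[z_0,\infty)$ with $z_0=2\nu_2 e^{v_0}>0$, and the integrand is continuous there, so the only issue is integrability at $+\infty$. Two ingredients control this: the exponential prefactor $e^{\delta(\lambda)z}$, whose exponent at $\lambda=1$ is supplied by Lemma~\ref{delta}, and the large-$z$ asymptotics of the Tricomi function, namely $\Psi(\alpha,\beta;z)\sim z^{-\alpha}$ as $\Re(z)\to+\infty$ whenever $\Re(\alpha)>0$ (relation 6.2.2 of \cite{bea2010}). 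Here the first argument of $\Psi$ is $\alpha=\eta-\frac{\nu_1}{\nu_2}+\frac12$, and since $\eta=\sqrt{\frac{a^2}{\sigma^4}+\frac{s^2}{\sigma^2}}\geq\left|\frac{a}{\sigma^2}\right|$ one checks $\Re(\alpha)>0$ throughout, so the asymptotic is legitimate and carries no logarithmic correction.

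When $b\geq\rho\sigma$ the argument is immediate. By Lemma~\ref{delta} and the sign analysis in its proof, $\delta(1)=-1$ if $b>\rho\sigma$ and $\delta(1)=-\frac12$ if $b=\rho\sigma$; in either case $\delta(1)<0$. The exponential decay $e^{\delta(1)z}$ then dominates the merely polynomial factor $z^{\eta-1+\frac{a}{\sigma^2}}\Psi(\cdots)\sim z^{\eta-1+\frac{a}{\sigma^2}-\alpha}$, and the integral converges without any further computation.

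The delicate case is $b<\rho\sigma$, where Lemma~\ref{delta} gives $\delta(1)=0$: the exponential prefactor disappears and convergence must come from the polynomial tail alone. Here I would evaluate $\frac{\nu_1}{\nu_2}$ at $\lambda=1$ using the relation $\frac{\nu_1}{\nu_2}=-\left(\frac{a}{\sigma^2}+\frac12\right)2\frac{\beta_0}{2\nu_2}$ together with $\frac{\beta_0}{2\nu_2}(\lambda=1)=+\frac12$ (the value of the sign term when $b<\rho\sigma$), which yields $\frac{\nu_1}{\nu_2}(\lambda=1)=-\frac{a}{\sigma^2}-\frac12$. Consequently the first argument of $\Psi$ becomes $\alpha=\eta-\frac{\nu_1}{\nu_2}+\frac12=\eta+\frac{a}{\sigma^2}+1>0$, and the integrand behaves like $z^{\eta-1+\frac{a}{\sigma^2}}\cdot z^{-\alpha}=z^{-2}$ as $z\to\infty$. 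Since $\int^\infty z^{-2}\,dz<\infty$, the integral is finite.

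The main obstacle — and the entire content of the lemma — is precisely this borderline case: the exact cancellation $\eta-1+\frac{a}{\sigma^2}-\alpha=-2$ is what rescues convergence once the exponential damping is lost at $\delta(1)=0$. This is the analytic shadow of the martingale property holding at the edge $\lambda=1$, and it mirrors the $z^{-2}$ (there $v^{-2}$) tail already met in the Feller-criterion computation for $\alpha=1,\ b=\rho\sigma$. I would stress that verifying $\Re(\alpha)>0$ before invoking the $\Psi$-asymptotic is essential, since that positivity is exactly what guarantees the clean $z^{-\alpha}$ decay on which the $z^{-2}$ estimate rests.
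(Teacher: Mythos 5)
Your proof is correct and follows essentially the same route as the paper's: apply the large-$z$ asymptotic $\Psi(\alpha,\beta;z)\sim z^{-\alpha}$, observe that $\delta(1)<0$ gives exponential decay when $b\geq\rho\sigma$, and in the borderline case $b<\rho\sigma$ use $\frac{\nu_1}{\nu_2}(\lambda=1)=-\bigl(\frac{a}{\sigma^2}+\frac12\bigr)$ to get the integrable $z^{-2}$ tail. You merely make explicit the power $-2$ and the positivity of the first argument of $\Psi$, which the paper leaves implicit.
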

\begin{proof}
We know that $\Psi\left(\eta-\frac{\nu_1}{\nu_2}+\frac{1}{2},1+2\eta;z\right) \sim z^{\frac{\nu_1}{\nu_2}-\eta-\frac{1}{2}}$, so the integrand behaves like 
$$z^{\frac{\nu_1}{\nu_2}+\frac{a}{\sigma^2}-\frac{3}{2}} e^{\frac{(-1+sgn(\rho \sigma-b))z}{2}} $$ at infinity.
Therefore $I_2$ will be finite if $b\geq\rho \sigma$. If $b<\rho \sigma$ then it will be finite if and only if
$\frac{\nu_1}{\nu_2}+\frac{a}{\sigma^2}<\frac{1}{2}$ which is true since $\frac{\nu_1}{\nu_2}=- \left(\frac{a}{\sigma^2} + \frac{1}{2}\right)$.
\end{proof}\\

\subparagraph{When $\delta(\lambda)<-1 $ (in particular, when $\lambda \in ]1, \lambda_+[$):} We have
$$z^{\eta-1+\frac{a}{\sigma^2}}e^{\delta(\lambda)z}\Psi\left(\eta-\frac{\nu_1}{\nu_2}+\frac{1}{2},1+2\eta;z\right) = \frac{1}{2i\pi} \int_{-i\infty}^{i\infty} \frac{\Gamma(bb-1+t) \Gamma(t)}{\Gamma(aa-1+t)} z^{-\eta-1+\frac{a}{\sigma^2}-t} e^{(\delta(\lambda)+1) z} dt $$
with $aa = \eta-\frac{\nu_1}{\nu_2}+\frac{3}{2} $ and $bb=1+2\eta$.\\

Moreover, we know that the integral converges locally uniformly in $z$, so that we can apply Fubini's theorem and obtain
$$I_2 = \frac{1}{2i\pi} \int_{-i\infty}^{i\infty} \frac{\Gamma(bb-1+t) \Gamma(t)}{\Gamma(aa-1+t)} \int_{z_0}^\infty
z^{-\eta-1+\frac{a}{\sigma^2}-t} e^{(\delta(\lambda)+1)z} dz dt  $$ 
where the inner integral is finite since $\delta(\lambda)+1<0$ by assumption.

Now, define 
\begin{equation*}
j(t)=\int_{z_0}^\infty z^{-\eta-1+\frac{a}{\sigma^2}-t} e^{(\frac{1}{2}+\frac{\beta_0}{2\nu_2})z} dz = |\frac{1}{2}+\frac{\beta_0}{2\nu_2}|^{\eta-\frac{a}{\sigma^2}+t} \Gamma(-\eta+\frac{a}{\sigma^2}-t, \frac{z_0}{2} | 1+\frac{\beta_0}{\nu_2}| )
\end{equation*}
where $\Gamma(,)$ denotes the upper incomplete Gamma function. Since $z_0 \ne 0$ we know (\cite{DLMF} 8.2, (ii)) that $\Gamma(a,z)$ is an entire function of $a$, and will not contribute to the poles of the integrand.

\subparagraph{An explicit hypergeometric series for $I_2$:} We apply the theorem of residues to get a series from the above complex integral.
The poles of the integrand are located:
\begin{itemize}
\item at $t=-n, n \in \mathbb{N}$, with residue $\frac{\Gamma(bb-1-n)}{\Gamma(aa-1-n) n!} (-1)^n  j(-n)$
\item at $t=-n+1-bb, n \in \mathbb{N}$, with residue $\frac{\Gamma(-n+1-bb)}{\Gamma(aa-n-bb) n!} (-1)^n j(-n+1-bb)  $
\end{itemize}
so that by Cauchy's residue theorem we get 
$$I_2 =\sum_{n=0}^\infty \frac{(-1)^{n}}{n!} (\frac{\Gamma(bb-1-n) j(-n)}{\Gamma(aa-1-n)} + \frac{ \Gamma(1-bb-n) j(-n+1-bb)}{\Gamma(aa-bb-n)})  $$
$I_2$ can be computed easily by writing down the explicit  recurrences between successive terms of the two series. Calls to the $\Gamma$ and $\Gamma(,)$ functions are only required for the constant and index zero terms.
We have the following recurrence relation for a generic $j^*(t)=\int_{z_0}^\infty z^{\alpha-t} e^{-\beta z} dz$:
$$j^*(-(n+1)) = \frac{e^{-\beta z_0} z_0^{\alpha+n+1} }{\beta}+\frac{\alpha+n+1}{\beta} j^*(-n).$$

\paragraph{Complete expression of the double transform:}

We now have all the elements to compute the Laplace transform of the asset. In fact, we have
$$\int_0^\infty e^{-\frac{s^2}{2} t} \mathbb{E}\left[\left(\frac{f_t}{f_0}\right)^\lambda\right] dt = \frac{1}{\sigma^2} e^{-\frac{a}{\sigma^2} v_0 +(\frac{b}{\sigma^2}-\frac{\lambda \rho}{\sigma}) e^{v_0}} \int_{-\infty}^{\infty} G(v_0,y, \frac{\eta^2}{2}) F(0,y) dy$$

where $F(0,y)=e^{\frac{a}{\sigma^2} y + \beta_0 e^{y}}$, $\eta^2=\frac{a^2}{\sigma^4}+\frac{s^2}{\sigma^2}$ and

$$ G(v,y, \frac{\eta^2}{2}) = \frac{ 2^{2\eta+1} \nu_2^{2 \eta} \Gamma\left(\eta-\frac{\nu_1}{\nu_2}+\frac{1}{2}\right)}{\Gamma(1+2\eta)} e^{\eta (v+y)}
e^{-\nu_2 (e^v+e^y)} [1(y>v) \Psi(;2\nu_2 e^y) \Phi(;2\nu_2 e^v)+1(y<v) \Psi(;2\nu_2 e^v) \Phi(;2\nu_2 e^y)]$$
where the 1st and 2nd arguments of the Kummer functions are $\eta-\frac{\nu_1}{\nu_2}+\frac{1}{2}, 1+2 \eta$. This leads to
\begin{equation}
\int_0^\infty e^{-\frac{s^2}{2} t} \mathbb{E}\left[\left(\frac{f_t}{f_0}\right)^\lambda\right]dt =  
\frac{ 2^{\eta+1-\frac{a}{\sigma^2}} \nu_2^{2\eta-\frac{a}{\sigma^2}} \Gamma\left(\eta-\frac{\nu_1}{\nu_2}+\frac{1}{2}\right)}{\sigma^2 \Gamma(1+2\eta)} e^{(\eta-\frac{a}{\sigma^2}) v_0}e^{(\frac{b}{\sigma^2}-\frac{\lambda \rho}{\sigma}-\nu_2) e^{v_0}} [\Phi(;z_0) I_2 + \Psi(;z_0) I_1]. \label{Ltflambda} 
\end{equation}

\subsubsection{A purely analytical approach to the martingale property}

Since $(f_t)_{t \geq 0}$ is a positive super martingale, $\mathbb{E}\left[\left(\frac{f_t}{f_0}\right)\right] \leq 1$ and it follows that $(f_t)_{t \geq 0}$ is a martingale if and only if for some $s>0$
$$\int_0^\infty e^{-\frac{s^2}{2} t} \mathbb{E}\left[\left(\frac{f_t}{f_0}\right)^\lambda\right] dt = \frac{2}{s^2}$$
with $\lambda$ set to 1. Therefore there is some hope to re-find the results of the "full-blown" section (in case $\alpha=1$) in the previous calculations when $\lambda=1$. Note also that when $\lambda=0$ this should hold irrespective of the model parameters.\\

We start with the following two useful lemmas.

\begin{lemma} \label{nu12}
If $\lambda=0$, then
\begin{itemize}
\item $\frac{\nu_1}{\nu_2}=\frac{a}{\sigma^2}+\frac{1}{2}$
\item $\frac{b}{\sigma^2}-\frac{\lambda \rho}{\sigma}-\nu_2=0$
\end{itemize}
This also holds when $\lambda=1$ if and only if $b\geq \rho \sigma$.
\end{lemma}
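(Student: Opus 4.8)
**The plan is to verify each claimed equality by direct substitution into the defining formulas for $\nu_1$, $\nu_2$, and $\beta_0$.** Recall from the setup that $\beta_1 = (b-\lambda\rho\sigma)\left(\frac{a}{\sigma^2}+\frac{1}{2}\right)$, $\beta_2^2 = -\lambda^2(1-\rho^2)+\lambda\left(1-\frac{2b\rho}{\sigma}\right)+\frac{b^2}{\sigma^2}$, and $\beta_0 = \frac{\lambda\rho\sigma-b}{\sigma^2}$, with $\nu_1 = \frac{\beta_1}{\sigma^2}$ and $\nu_2 = \frac{\beta_2}{\sigma}$ (taking the positive square root). The computations for $\lambda=0$ should be completely routine: setting $\lambda=0$ gives $\beta_1 = b\left(\frac{a}{\sigma^2}+\frac{1}{2}\right)$ and $\beta_2^2 = \frac{b^2}{\sigma^2}$, so $\beta_2 = \frac{b}{\sigma}$ (using $b>0$) and $\nu_2 = \frac{b}{\sigma^2}$. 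Then $\frac{\nu_1}{\nu_2} = \frac{\beta_1}{\sigma^2 \nu_2} = \frac{\beta_1}{b} = \frac{a}{\sigma^2}+\frac{1}{2}$, proving the first bullet, and $\frac{b}{\sigma^2}-\nu_2 = \frac{b}{\sigma^2}-\frac{b}{\sigma^2}=0$ (since $\lambda=0$ kills the $\frac{\lambda\rho}{\sigma}$ term), proving the second.

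For the $\lambda=1$ claim I would work from the factored form of $\beta_2^2$ already exhibited in the proof of Lemma~\ref{delta}, namely $\sigma^2\beta_2^2 = (\lambda\rho\sigma-b)^2+\sigma^2\lambda(1-\lambda)$. Setting $\lambda=1$ makes the second term vanish, leaving $\sigma^2\beta_2^2 = (\rho\sigma-b)^2$, so $\beta_2 = \frac{|\rho\sigma-b|}{\sigma}$ and $\nu_2 = \frac{|\rho\sigma-b|}{\sigma^2} = \frac{|b-\rho\sigma|}{\sigma^2}$. The key point is that the positive square root introduces the absolute value, and this is precisely where the condition $b\geq\rho\sigma$ enters. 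The second bullet reads $\frac{b}{\sigma^2}-\frac{\rho}{\sigma}-\nu_2 = \frac{b-\rho\sigma}{\sigma^2}-\frac{|b-\rho\sigma|}{\sigma^2}$, which is zero exactly when $b-\rho\sigma\geq 0$, i.e. $b\geq\rho\sigma$, and is strictly negative otherwise; this establishes the ``if and only if.''

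For the first bullet at $\lambda=1$, I would substitute into $\frac{\nu_1}{\nu_2} = \frac{\beta_1}{\sigma^2\nu_2}$. With $\lambda=1$ we have $\beta_1 = (b-\rho\sigma)\left(\frac{a}{\sigma^2}+\frac{1}{2}\right)$ and $\sigma^2\nu_2 = |b-\rho\sigma|$, so $\frac{\nu_1}{\nu_2} = \frac{(b-\rho\sigma)}{|b-\rho\sigma|}\left(\frac{a}{\sigma^2}+\frac{1}{2}\right) = \mathrm{sgn}(b-\rho\sigma)\left(\frac{a}{\sigma^2}+\frac{1}{2}\right)$. This equals $\frac{a}{\sigma^2}+\frac{1}{2}$ precisely when $b-\rho\sigma>0$ (and trivially on the boundary $b=\rho\sigma$ the whole expression degenerates, which must be handled as the limiting case where $\nu_2\to 0$). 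Thus both bullets hold simultaneously under the single condition $b\geq\rho\sigma$.

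The only subtlety—hardly an obstacle—is the consistent choice of sign for the square root $\nu_2=\sqrt{\beta_2^2}$ and the careful bookkeeping of the absolute value it produces; this is exactly what forces the condition $b\geq\rho\sigma$ rather than an unconditional identity. I would state explicitly that $\nu_2$ is taken positive (as required for the Whittaker-function representation of the Green function to be valid) and then let the sign of $b-\rho\sigma$ drive the conclusion. The boundary case $b=\rho\sigma$ deserves a one-line remark since there $\nu_2=0$ and the formulas must be read as limits, consistent with the observation in the Inversion subsection that $b=\rho\sigma$ yields the driftless volatility $v_t=a+\sigma\tilde w_{2,t}$.
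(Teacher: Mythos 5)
Your proof is correct, and the paper in fact states this lemma without any proof, treating it as a routine verification from the definitions of $\beta_0$, $\beta_1$, $\beta_2^2$, $\nu_1$, $\nu_2$; your direct substitution, with the sign of $b-\rho\sigma$ entering through the positive square root $\nu_2=\beta_2/\sigma$, is exactly the computation intended. Your extra care with the degenerate boundary case $b=\rho\sigma$ (where $\nu_1=\nu_2=0$) goes slightly beyond what the paper records and is a reasonable reading of the equality case.
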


The other useful result is the next lemma.
\begin{lemma} \label{phiPsi}
Under the conditions of Lemma \ref{nu12},
\begin{itemize}
\item $I_1 = \frac{z_0^{\eta+\frac{a}{\sigma^2}}}{\eta+\frac{a}{\sigma^2}} e^{-z_0} \Phi(\eta-\frac{a}{\sigma^2}+1,1+2\eta;z_0)$
\item $I_2 = z_0^{\eta+\frac{a}{\sigma^2}} e^{-z_0} \Psi(\eta-\frac{a}{\sigma^2}+1,1+2\eta;z_0)$.
\end{itemize}

\end{lemma}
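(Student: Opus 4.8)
The plan is to exploit the fact that the hypotheses of Lemma \ref{nu12} collapse the integrals \eqref{I1Spot}--\eqref{I2Spot} to the borderline exponential rate $\delta(\lambda)=-1$, at which point each integrand becomes an exact derivative. First I would substitute the two identities of Lemma \ref{nu12} into the definitions of $I_1,I_2$. The condition $\frac{\nu_1}{\nu_2}=\frac{a}{\sigma^2}+\frac12$ turns the first Kummer parameter $\eta-\frac{\nu_1}{\nu_2}+\frac12$ into $\eta-\frac{a}{\sigma^2}$, while the condition $\frac{b}{\sigma^2}-\frac{\lambda\rho}{\sigma}-\nu_2=0$ forces $\beta_0=-\nu_2$, hence $\delta(\lambda)=-\frac12+\frac{\beta_0}{2\nu_2}=-1$, so the exponential weight is exactly $e^{-z}$. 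Writing $A=\eta-\frac{a}{\sigma^2}$, $B=1+2\eta$ and $C=\eta+\frac{a}{\sigma^2}$, and recording the key relation $C=B-A-1$, the integrals become $I_1=\int_0^{z_0}z^{C-1}e^{-z}\Phi(A,B;z)\,dz$ and $I_2=\int_{z_0}^{\infty}z^{C-1}e^{-z}\Psi(A,B;z)\,dz$.

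The heart of the argument is the claim that $\frac{z^C}{C}e^{-z}\Phi(A+1,B;z)$ is an antiderivative of the $I_1$-integrand and that $-z^Ce^{-z}\Psi(A+1,B;z)$ is an antiderivative of the $I_2$-integrand. For $I_1$, applying Kummer's transformation $e^{-z}\Phi(A+1,B;z)=\Phi(C,B;-z)$ (legitimate since $B-A-1=C$) rewrites the candidate antiderivative as $\frac{z^C}{C}\Phi(C,B;-z)$; differentiating and using $\frac{d}{dw}\Phi(C,B;w)=\frac{C}{B}\Phi(C+1,B+1;w)$ reduces the claim, after setting $w=-z$, to the standard three-term relation $\Phi(C+1,B;w)=\Phi(C,B;w)+\frac{w}{B}\Phi(C+1,B+1;w)$, which is immediate from the power series. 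For $I_2$ the Tricomi function has no convenient series, so instead I would use the Euler representation $\Psi(\alpha,\beta;z)=\frac{1}{\Gamma(\alpha)}\int_0^{\infty}e^{-zt}t^{\alpha-1}(1+t)^{\beta-\alpha-1}\,dt$ (valid because $\eta\geq|a|/\sigma^2\geq a/\sigma^2$ gives $\Re A\geq0$, strictly positive for $s>0$). Differentiating $-z^Ce^{-z}\Psi(A+1,B;z)$ and integrating by parts in $t$ twice — the boundary terms vanishing by $\Re A>0$ and exponential decay — all terms recombine through the elementary identity $(1+t)^{C-1}(1+t)=(1+t)^{C}$ into $\frac{1}{\Gamma(A)}\int_0^\infty e^{-zt}t^{A-1}(1+t)^{C}\,dt=\Psi(A,B;z)$, which is precisely the $I_2$-integrand divided by $z^{C-1}e^{-z}$.

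Finally I would read off the values from the antiderivatives. For $I_1$ the lower endpoint contributes nothing since $z^C\to0$ as $z\to0^+$ (here $\Re C=\Re(\eta+\frac{a}{\sigma^2})>0$, which is also the convergence condition noted after \eqref{I1Spot}), leaving $I_1=\frac{z_0^{\,\eta+a/\sigma^2}}{\eta+a/\sigma^2}e^{-z_0}\Phi(\eta-\frac{a}{\sigma^2}+1,1+2\eta;z_0)$. For $I_2$ the upper endpoint contributes nothing because $z^Ce^{-z}\Psi(A+1,B;z)\sim z^{C-A-1}e^{-z}\to0$, leaving $I_2=z_0^{\,\eta+a/\sigma^2}e^{-z_0}\Psi(\eta-\frac{a}{\sigma^2}+1,1+2\eta;z_0)$, which are exactly the two asserted formulas.

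I expect the main obstacle to be the verification of the $\Psi$ antiderivative: unlike the Kummer function it admits no termwise series manipulation, so the integration-by-parts bookkeeping in the Euler representation — together with the careful justification that the $t=0$ and $t=\infty$ boundary terms vanish under the standing sign conditions on $A$ and $\Re z$ — is where the real work lies. Everything else is substitution and boundary evaluation.
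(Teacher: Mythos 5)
Your proposal is correct and follows essentially the same route as the paper: under the hypotheses of Lemma \ref{nu12} the exponent collapses to $e^{-z}$ and the first Kummer parameter to $\eta-\frac{a}{\sigma^2}$, and each integrand is recognized as the exact derivative of $\frac{1}{C}z^{C}e^{-z}\Phi(A+1,B;z)$ (respectively $-z^{C}e^{-z}\Psi(A+1,B;z)$), after which only the boundary evaluations remain. The sole difference is that the paper simply cites the differentiation formulas (\cite{DLMF}, 13.3.19 and 13.3.26) where you re-derive them via Kummer's transformation plus a contiguous relation for $\Phi$ and via the Euler integral representation (one integration by parts actually suffices) for $\Psi$; both verifications are sound.
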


\begin{proof}
Combining lemmas \ref{delta} and \ref{nu12} we get that $I_1$ writes
$$I_1=\int_{0}^{z_0}z^{\eta-1+\frac{a}{\sigma^2}}e^{-z}\Phi\left(\eta-\frac{a}{\sigma^2},1+2\eta;z\right)dz.$$
With $aa=\eta-\frac{a}{\sigma^2}+1, bb=1+2\eta$ the integrand is $z^{bb-aa-1}e^{-z}\Phi\left(aa-1,bb;z\right)$ and by (\cite{DLMF}, 13.3.19) this is also 
$$\frac{1}{bb-aa}\frac{d}{dz} z^{bb-aa}e^{-z}\Phi\left(aa,bb;z\right).$$
Since $bb>aa$ and $\Phi\left(aa,bb;0\right)=1$ the first assertion follows. $I_2$ rewrites 
$$I_2=\int_{z_0}^{\infty}z^{\eta-1+\frac{a}{\sigma^2}}e^{-z}\Psi\left(\eta-\frac{a}{\sigma^2},1+2\eta;z\right)dz,$$ and by (\cite{DLMF}, 13.3.26)
this is also
$$-\frac{d}{dz} z^{bb-aa}e^{-z}\Psi\left(aa,bb;z\right).$$
Since $\Psi\left(aa,bb;z\right) \sim z^{-aa}$ as $z \to \infty$ the result follows.
\end{proof}\\

To conclude, let us first assemble the pieces together:\\
\paragraph{In case $\lambda=0$ or $\lambda=1$ with $b \geq \rho \sigma$:}
Combining the expression \eqref{Ltflambda} with lemmas \ref{nu12} and \ref{phiPsi} and using $z_0=2 \nu_2 e^{v_0}$we get, with
$A(z_0) = \int_0^\infty e^{-\frac{s^2}{2} t} \mathbb{E}\left[\left(\frac{f_t}{f_0}\right)^\lambda\right] dt$,

$$A(z_0)= \frac{ 2 \Gamma\left(\eta-\frac{a}{\sigma^2}\right)}{\sigma^2 \Gamma(1+2\eta)}
 z_0^{2\eta} e^{-z_0} [\Phi(;z_0) \Psi(\eta-\frac{a}{\sigma^2}+1,1+2\eta;z_0)  + \Psi(;z_0) \frac{\Phi(\eta-\frac{a}{\sigma^2}+1,1+2\eta;z_0) }{\eta+\frac{a}{\sigma^2}}  ].$$

We know that in the case $\lambda=0$, this expression is the Laplace transform with respect to time of 1. Since it is 
equal to the Laplace transform with respect to time of $\mathbb{E}\left[\left(\frac{f_t}{f_0}\right)\right]$ when $\lambda=1$ with $b \geq \rho \sigma$
we have proven the martingale property in that case. 

\paragraph{Working out the identity  with Kummer functions:} We know that this expression should be equal to $\frac{2}{s^2}$ for any $s$.  Can we show this?

\subparagraph{When $z_0 \to 0$:}
Then $\Phi(;z_0) \to 1$ and at least when $\eta>\frac{1}{2}$, $\Psi(aa, bb,;z) \sim \frac{\Gamma(bb-1)}{\Gamma(aa)} z^{1-bb}$ with $bb=1+2\eta$ so that
$$ A(z_0) \sim \frac{ 2 \Gamma\left(\eta-\frac{a}{\sigma^2}\right)}{\sigma^2 \Gamma(1+2\eta)} [\frac{\Gamma(2 \eta)}{\Gamma(\eta-\frac{a}{\sigma^2}+1)} 
+ \frac{\Gamma(2 \eta)}{(\eta+\frac{a}{\sigma^2}) \Gamma(\eta-\frac{a}{\sigma^2})}]$$ which is equal to $\frac{1}{\sigma^2 \eta} [\frac{1}{\eta-\frac{a}{\sigma^2}}+\frac{1}{\eta+\frac{a}{\sigma^2}}] = \frac{2}{s^2}.$

\subparagraph{When $z_0 \to \infty$:}
Then $\Phi(aa,bb;z_0) \sim \frac{\Gamma(bb)}{\Gamma(aa)} e^{z_0} z_0^{a-bb} $ and $\Psi(aa, bb;z) \sim z_0^{-aa}$ so that
$$ A(z_0) \sim \frac{ 2 \Gamma\left(\eta-\frac{a}{\sigma^2}\right)}{\sigma^2} \frac{1}{z_0} [\frac{1}{z_0 \Gamma(\eta-\frac{a}{\sigma^2})} 
+ \frac{z_0}{(\eta+\frac{a}{\sigma^2})  \Gamma(\eta-\frac{a}{\sigma^2}+1)}]$$ which tends to $\frac{ 2 \Gamma\left(\eta-\frac{a}{\sigma^2}\right)}{\sigma^2 (\eta+\frac{a}{\sigma^2})  \Gamma(\eta-\frac{a}{\sigma^2}+1)}= \frac{2}{s^2}.$

The last piece is the  following result.			
\begin{lemma}
Let $a,b$ such that $a>0$ and $b>1$. Then, for all $z$
$$\Phi(a,b;z) \Psi(a+1,b;z) (b-a-1) + \Phi(a+1,b;z) \Psi(a,b;z) = \frac{\Gamma(b)}{a \Gamma(a)} z^{1-b} e^z  $$
\end{lemma}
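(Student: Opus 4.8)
The plan is to show that the left-hand side, which I abbreviate
\[
f(z) = (b-a-1)\,\Phi(a,b;z)\,\Psi(a+1,b;z) + \Phi(a+1,b;z)\,\Psi(a,b;z),
\]
collapses to a Wronskian of the two Kummer solutions at parameters $(a,b)$. The idea is to eliminate the ``shifted'' functions $\Phi(a+1,b;z)$ and $\Psi(a+1,b;z)$ in favour of $\Phi(a,b;z),\Psi(a,b;z)$ and their first derivatives, using the contiguous relations with $b$ held fixed, after which the cross terms should cancel and leave precisely the Wronskian, whose value is classical.

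First I would record the two contiguous relations (\cite{DLMF}, \S13.3), each of which can be checked directly on the defining series and asymptotics:
\[
z\,\Phi'(a,b;z) = a\bigl(\Phi(a+1,b;z)-\Phi(a,b;z)\bigr),
\]
\[
z\,\Psi'(a,b;z) = -a\,\Psi(a,b;z) + a(a-b+1)\,\Psi(a+1,b;z).
\]
Equivalently $\Phi(a+1,b;z) = \Phi(a,b;z) + \tfrac{z}{a}\Phi'(a,b;z)$ and $a(a-b+1)\,\Psi(a+1,b;z) = z\,\Psi'(a,b;z) + a\,\Psi(a,b;z)$.

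The decisive algebraic observation is $b-a-1 = -(a-b+1)$. Writing $M=\Phi(a,b;z)$, $U=\Psi(a,b;z)$, $M_+=\Phi(a+1,b;z)$, $U_+=\Psi(a+1,b;z)$ and multiplying $f$ by $a$, the first term becomes $a(b-a-1)MU_+ = -M\,[a(a-b+1)U_+] = -M(zU'+aU)$, while the second becomes $aM_+U = (aM+zM')U$. Adding these, the $aMU$ contributions cancel and I obtain
\[
a\,f(z) = -z\bigl(MU' - M'U\bigr) = -z\,\mathcal{W}\{\Phi(a,b;z),\Psi(a,b;z)\}.
\]
Finally I invoke the Wronskian of the Kummer functions (\cite{DLMF}, 13.2.34), $\mathcal{W}\{\Phi,\Psi\} = \Phi\Psi'-\Phi'\Psi = -\tfrac{\Gamma(b)}{\Gamma(a)}z^{-b}e^{z}$, which gives $a\,f(z) = \tfrac{\Gamma(b)}{\Gamma(a)}z^{1-b}e^{z}$ and hence the claimed identity after dividing by $a>0$.

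The only genuine obstacle is pinning down the two special-function ingredients exactly, namely the coefficient $a(a-b+1)$ and the sign in the $\Psi$-relation, and the sign of the Wronskian; the rest is a one-line cancellation. I would verify each by matching leading terms: the $\Phi$-relation against the power series at $z=0$, and the $\Psi$-relation against both the expansion $\Psi(a,b;z)\sim z^{-a}$ as $z\to\infty$ and the behaviour $\Psi(a,b;z)\sim \tfrac{\Gamma(b-1)}{\Gamma(a)}z^{1-b}$ as $z\to 0$. Note that by arranging the computation as $a\,f(z)$ I never divide by $a-b+1$, so the argument stays valid (verbatim, or by continuity) in the degenerate case $a=b-1$ where $b-a-1=0$; the hypotheses $a>0,\ b>1$ serve only to keep $\Gamma(a),\Gamma(b)$ finite and to permit the final division by $a$.
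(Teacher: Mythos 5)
Your proof is correct and follows essentially the same route as the paper: both arguments reduce the identity to the Wronskian $\Phi\Psi'-\Phi'\Psi=-\frac{\Gamma(b)}{\Gamma(a)}z^{-b}e^{z}$ combined with contiguous relations, your relations $z\Phi'(a,b;z)=a(\Phi(a+1,b;z)-\Phi(a,b;z))$ and $z\Psi'(a,b;z)=-a\Psi(a,b;z)+a(a-b+1)\Psi(a+1,b;z)$ being exactly the paper's derivative formulas merged with its uses of (DLMF 13.3.4) and (13.3.10). Your forward organization via $a\,f(z)=-z\,\mathcal{W}$ is a clean, correct packaging of the same computation.
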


\begin{proof}
To alleviate the notations let $\Phi \equiv \Phi(a,b;z)$ and $\Psi \equiv \Psi(a,b;z)$ and let us drop the dependency in $z$. We know that the Wronskian 
$\Phi \Psi' -\Psi \Phi'$ is given by $-\frac{\Gamma(b)}{a \Gamma(a)} z^{-b} e^z $. By substituting the expressions of the derivatives we get
$$\frac{\Gamma(b)}{a \Gamma(a)} z^{1-b} e^z = z(\frac{\Psi}{b} \Phi(a+1,b+1)+\Phi \Psi(a+1,b+1)  ) $$
so we want to prove the identity
$$ z(\Psi \Phi(a+1,b+1)+b\Phi \Psi(a+1,b+1)  ) =b(\Phi \Psi(a+1,b)(b-a-1)+\Psi \Phi(a+1,b))  $$
which in turn amounts to $\Psi (z\Phi(a+1,b+1)-b\Phi(a+1,b))= b \Phi ((b-a-1)\Psi(a+1,b)- z \Psi(a+1,b+1))$. Now by (\cite{DLMF}, 13.3.4) we have:
$$z\Phi(a+1,b+1)-b\Phi(a+1,b) =-b \Phi $$
and by (\cite{DLMF}, 13.3.10)
$$(b-a-1)\Psi(a+1,b)- z \Psi(a+1,b+1)=-\Psi $$
and the result follows.
\end{proof}

\subsection{Pricing Vanilla Options}
We have all the elements to perform vanilla option pricing for the model. We focus on the computation of call option denoted 
\begin{equation*}
c(t,f_0)=e^{-rt}\mathbb{E}\left[\left(f_t-k\right)_+ \right].
\end{equation*}

We now take the Mellin transform $\mathcal{M}$ with respect to the strike as in \cite{jea2009} (see also \cite{panini2004}): the Mellin transform of the Call payoff with respect to the strike is given by $\int_0^\infty (x-k)_+ k^{\omega-1} dk = x\int_0^x k^{\omega-1} dk  - \int_0^x k^{\omega} dx = \frac{x^{\omega+1}}{\omega(\omega+1)}$,
for $\omega >0$. Therefore, we have
\begin{eqnarray}
\mathcal{M}(c(t,f_0),\lambda-1)= e^{-rt}\int_0^{+\infty} k^{\lambda-2} \mathbb{E}\left[\left(f_t-k\right)_+ \right] dk=\frac{e^{-rt}}{\lambda(\lambda-1)}\mathbb{E}\left[f_t^{\lambda}\right].
\end{eqnarray}
for $\lambda>1$.
If we take the Laplace transform of the above equation we get

\begin{eqnarray}
\int_0^{+\infty}e^{-\frac{s^2}{2}t}\int_0^{+\infty}k^{\lambda-2}\mathbb{E}\left[\left(f_t-k\right)_+ \right] dkdt=\frac{1}{\lambda(\lambda-1)}\int_0^{+\infty}e^{-\frac{s^2}{2}t}\mathbb{E}\left[f_t^{\lambda}\right]dt=\frac{g(\lambda,s)}{\lambda(\lambda-1)}
\end{eqnarray}

where $g$ is given in the previous section.  

\subsubsection{Strategy for the inversion of the double transform}

Let now $L(k, s)$ stands for the Laplace transform in time of the Call price. The Call price is given by the inverse Laplace transform of $L(k,s)$. We know that numerical algorithms like the Talbot method require only few (typically 20) evaluations of the function $L$, so our strategy will be to compute $L$ at the points required by the Talbot method by inverting the Mellin
transform of $L$. By Fubini's theorem the Mellin transform of $L(.,s)$ is given by:

$$\int_0^{+\infty}k^{\lambda-2} L(k,s) dk = \frac{g(\lambda,s)}{\lambda(\lambda-1)}$$

We shall need the following lemma.

\begin{lemma}
Let $s>0$. Then for $\lambda \in ]1, \lambda_+[$, 
$$c \to \frac{g(\lambda+i c,s)}{(\lambda+i c)(\lambda+i c-1)}$$ belongs to $L^1(\mathbb{R})$.
\end{lemma}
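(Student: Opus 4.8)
The plan is to sidestep the explicit formula \eqref{Ltflambda} for the modulus estimate and instead exploit the probabilistic definition $g(\lambda,s)=\int_0^{+\infty}e^{-s^2t/2}\mathbb{E}[(f_t/f_0)^\lambda]\,dt$. Writing $\lambda=\lambda_0+ic$ with $\lambda_0\in\,]1,\lambda_+[$ fixed and $c\in\mathbb{R}$, the key observation is that $f_t/f_0>0$, so that $|(f_t/f_0)^{\lambda_0+ic}|=(f_t/f_0)^{\lambda_0}$ pointwise. First I would combine this with the triangle inequality for integrals to obtain, uniformly in $c$,
$$|g(\lambda_0+ic,s)|\le \int_0^{+\infty} e^{-s^2 t/2}\,\mathbb{E}\big[(f_t/f_0)^{\lambda_0}\big]\,dt = g(\lambda_0,s).$$

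Next I would check that $g(\lambda_0,s)<+\infty$. For $\lambda_0\in\,]1,\lambda_+[$ the quantity $\beta_2^2$ is strictly positive (by Lemma \ref{delta}, since $\lambda_-<0<1<\lambda_+$ and $\beta_2$ is real on $]\lambda_-,\lambda_+[$), so the real moment $\mathbb{E}[(f_t/f_0)^{\lambda_0}]$ is finite for each $t$; finiteness of the time integral, and hence of $g(\lambda_0,s)$, then follows either from the region of convergence of the Laplace transform or, more directly, by reading off the right-hand side of \eqref{Ltflambda}, which is a finite product of Gamma factors, Kummer functions and exponentials (no Gamma pole occurs, since $\eta-\tfrac{\nu_1}{\nu_2}+\tfrac12>0$ for $s$ in the admissible range). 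This exhibits $g(\lambda_0,s)$ as a finite constant independent of $c$.

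With the uniform bound in hand the conclusion is immediate. Since $|\lambda_0+ic|=\sqrt{\lambda_0^2+c^2}$ and $|\lambda_0+ic-1|=\sqrt{(\lambda_0-1)^2+c^2}$, I would bound
$$\left|\frac{g(\lambda_0+ic,s)}{(\lambda_0+ic)(\lambda_0+ic-1)}\right|\le \frac{g(\lambda_0,s)}{\sqrt{\lambda_0^2+c^2}\,\sqrt{(\lambda_0-1)^2+c^2}}.$$
The right-hand side is continuous in $c$ (the denominator never vanishes, because $\lambda_0\ne 0$ and $\lambda_0\ne 1$) and is $O(c^{-2})$ as $|c|\to\infty$, hence integrable over $\mathbb{R}$. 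Therefore the map belongs to $L^1(\mathbb{R})$.

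The only genuinely delicate point is the finiteness of $g(\lambda_0,s)$ for \emph{every} $s>0$: if one insists on reading $g$ as the analytic continuation of the double transform rather than as a convergent integral, the clean bound above applies only where the defining integral converges, and for the remaining $s$ one must fall back on the asymptotics of \eqref{Ltflambda} as $|c|\to\infty$. There $\nu_2\sim\tfrac{\sqrt{1-\rho^2}}{\sigma}|c|$ and $z_0=2\nu_2 e^{v_0}$ grow linearly, and the exponentially growing factor in $\Phi(;z_0)$ cancels against the exponentially decaying prefactor $e^{-\nu_2 e^{v_0}}$, the net exponent collapsing to the bounded quantity $\Re(\beta_0)\,e^{v_0}$; what survives is polynomial in $|c|$. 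The hard part is then the bookkeeping of these polynomial factors (together with the oscillatory cancellation that a crude factor-by-factor modulus estimate discards) to confirm that the residual growth is slow enough to be absorbed by the $c^{-2}$ supplied by the denominator.
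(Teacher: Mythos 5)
Your proposal is correct and follows essentially the same route as the paper: the paper's one-line proof invokes exactly the bound $\left|\mathbb{E}\left[f_t^{\lambda+ic}\right]\right|\leq \mathbb{E}\left[f_t^{\lambda}\right]$, which gives $|g(\lambda+ic,s)|\leq g(\lambda,s)$ uniformly in $c$, and the $O(c^{-2})$ decay of the denominator then yields integrability. Your additional remarks on the finiteness of $g(\lambda_0,s)$ and on the asymptotics of the explicit formula are useful elaboration but not a different argument.
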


\begin{proof}
It follows readily from the fact that $\|\mathbb{E}\left[f_t^{\lambda+i c}\right]\|\leq \|\mathbb{E}\left[f_t^{\lambda}\right]\|$
\end{proof}

This lemma grants the validity of the inverse Mellin transform formula for $\lambda \in ]1, \lambda_+[$:
\begin{equation} \label{invmellin}
L(k, s) = \int_{\lambda+i \mathbb{R}} \frac{g(\tau,s)}{\tau (\tau-1)} k^{-\tau+1} d\tau.
\end{equation}

\subsubsection{Implementation}

In practice we discretize the integral \eqref{invmellin} using a quadrature with fixed size $N$. At each point, we use the hypergeometric series to evaluate $g$. It is readily checked that the
convergence of the series can be extended to the vertical line $\lambda+i \mathbb{R}$. We repeat this quadrature approximation for each point of the Talbot inversion algorithm. The choice of $N=100$ yields therefore typically 2000 calls of the function $g$. 

It should be noted that these calls can be performed in parallel. Note also that we can re-use the same evaluations of $g$ for different strikes $k$, so that the overall time to compute 
a whole (discretized) smile will be of the same order of magnitude than a single price, since the expensive part of the computation will be the evaluations of $g$.









\section{Related Works}

Our work contributes to the literature aiming at overcoming the issues faced when implementing the affine model. The model proposed here is also presented in \cite{HenryLabordere2009} page 281 where it is called the Geometric Brownian, see also \cite{HenryLabordere2007}. The techniques used in \cite{HenryLabordere2009} are different from those used here (certainly they can be connected). Also, it seems to us that the problem of martingale property of the stock is not analysed for that particular model. Lastly, we don't know whether the formulas developed in this book lead to a reasonable numerical implementation. To illustrate the problem at stake and underline the usefulness of the series representation for I1 and I2 we just need to mention the fact the use of equations \eqref{heatKernelMorsePotential} and  \eqref{thetaFunc} (this function being the Hartman-Waston density) often lead to tedious numerical problems, see for example \cite{barrieu2004}.\\

We were able to obtain an explicit solution for the case $\alpha=1$ but we also established that the martingale property in that case depends on the parameter values. Extending the results to a general $\alpha$ is certainly of interest. If we understand \cite{HenryLabordere2009} in this general case the model might not be solvable.\\

Another work to which we are related is \cite{itkin2013} who studied a stochastic volatility model using Lie group analysis. He obtains a closed-form solution for the transition probability for the volatility process involving confluent hypergeometric functions. The author mainly focuses on volatility derivatives and the techniques used to derive his results are different form ours. Note also that the class of models consider in \cite{itkin2013} does not contain the model proposed here.

\section{Conclusion}

We propose a new stochastic volatility model for which we develop the key elements to perform equity and volatility derivatives pricing. We found the conditions on the parameters ensuring  the martingale property of the stock. For a particular set of parameter (i.e. $\alpha=1$) we compute the Mellin transform of the stock which enables the pricing of vanilla options. The model has, by construction, a volatility which is positive and therefore solve a major drawback of the traditional square root process, used for example in the \cite{heston} model, which imposes a constraint on the parameters (i.e., the Feller condition) that is not satisfied in practice.

\clearpage
\footnotesize
\linespread{0}
\selectfont
\bibliographystyle{abbrvnat}

\end{document}